\documentclass[preprint,oneside,british,12pt]{elsarticle}

\usepackage{graphicx}
\usepackage{tabularx}
\usepackage{natbib}
\usepackage{algorithm}
\usepackage{algpseudocode}

\usepackage{amsthm}
\usepackage{amsmath}
\usepackage{subfigure}
\usepackage{color}
\usepackage{multirow}

\biboptions{authoryear}

\newcommand{\B}{\boldsymbol}
\newcommand{\E}{\mbox{E}}
\newcommand{\sd}{\mbox{sd}}

\newcommand{\Prob}{\mbox{P}}
\newcommand{\X}{\mathcal{X}}
\newcommand{\N}{\mathcal{N}}
\newtheorem{proposition}{Proposition}

\newlength{\LL}\settowidth{\LL}{$\lambda = 1,000$}

\begin{document}

\title{One family, six distributions -- A flexible model for insurance claim severity}

\author[uio]{Erik B{\o}lviken}
\ead{erikb@math.uio.no}
\author[uio]{Ingrid Hob{\ae}k Haff\corref{cor1}}
\ead{ingrihaf@math.uio.no}

\cortext[cor1]{Corresponding author}
\address[uio]{Department of Mathematics, University of Oslo, Postboks 1053 Blindern, 0316 Oslo, Norway}


\begin{abstract}
We propose a new class of claim severity distributions with six 
parameters, that has the standard two-parameter distributions, the 
log-normal, the log-Gamma, the Weibull, the Gamma and the Pareto, 
as special cases. This distribution is much more flexible than its 
special cases, and therefore more able to to capture important 
characteristics of claim severity data. Further, we have 
investigated how increased parameter uncertainty due to a larger
number of parameters affects the estimate of the reserve. This is 
done in a large simulation study, where both the characteristics of 
the claim size distributions and the sample size are varied. We 
have also tried our model on a set of motor insurance claims from 
a Norwegian insurance company. The results from the study show that 
as long as the amount of data is reasonable, the five- and 
six-parameter versions of our model provide very good estimates of 
both the quantiles of the claim severity distribution and the 
reserves, for claim size distributions ranging from medium to very 
heavy tailed. However, when the sample size is small, our model 
appears to struggle with heavy-tailed data, but is still adequate 
for data with more moderate tails. 
\end{abstract}
\begin{keyword}
Automization, extended Pareto, power transformation, reserve estimation, unimodal families
\end{keyword}

\titlepage
\maketitle

\clearpage

\section{Introduction}\label{sec:introduction}
The Burr family of loss distributions goes back to \cite{Burr1942} and is also 
called extended or generalized Pareto and Beta prime. Its probability density
functions are all unimodal (or monotonically decreasing). Further, there are 
three parameters which capture distributions of widely different shapes that 
range from heavy-tailed Paretos to light-tailed Gammas. The latter are on the 
boundary of the parameter space and are only reached as two of the parameters 
approach infinity. An even more versatile parametric family of distributions 
is proposed in this paper by applying parametrised power transformations to 
Burr-distributed random variables. We have used the classical BoxCox approach 
in~\cite{box64} with the modification that there are two free parameters. The 
resulting PowerBurr family has five parameters. It does again lead to 
unimodal probability density functions, and includes ten of the most commonly 
applied and quoted loss models in actuarial science, for example the log-normal 
and the Weibull in addition to the parent Burr distribution  itself. This will 
be verified in the next section where an even wider class with a sixth 
parameter also  will be introduced.

Constructions such as the preceding one raise the intriguing question of 
whether loss modelling can be carried out by fitting one of these 
many-parameter families and simply use the distribution found without any more 
ado, thus avoiding an additional model selection step. In situations where an 
underlying heterogeneity of unknown or unquantifiable source that cannot be 
linked to an observed covariate and expressed into a regression relationship,
more complex modelling may be needed. In such cases the natural model is a 
mixture distribution as in \cite{lee2010}, \cite{bakar2015} and 
\cite{miljkovic2016}, or even a traditional kernel density estimate based on, 
say Gaussian mixtures, as in \cite{scott92}. Yet, the most common of all types
of variations is undoubtedly the unimodal one, and here PowerBurr fitting is 
an alternative to the traditional approach of trying different two-parameter 
families and choosing between them by Q-Q plotting, formal selection criteria 
like AIC or BIC or goodness-of-fit testing. 

Our aim is not to decide whether PowerBurr fitting is superior or inferior to 
such methods in terms of the error in the final solution. However, we will 
argue that its conceptual simplicity is attractive and offers a useful 
potential for automating the entire process from historical data to statements 
of risk in the computer. With a standard Poisson model for claim frequency, 
$99\%$ or $99.5\%$ solvency capital may be be evaluated by Monte Carlo, after 
estimating claim frequency and fitting loss data. The entire process may then 
be carried out with no or almost no human intervention at all. Such a program 
is not hard to  implement in the computer and would  draw on the simple 
algorithm for simulating  PowerBurr variables in Section 2, but it  does 
require a robust numerical procedure for estimating the parameters, which is a 
challenge. The criterion might be quite flat in some of the parameters, perhaps 
with multiple optima which are important practical obstacles to be addressed 
below.

A flat likelihood function means that many alternative distributions are almost 
equally likely given historical losses. Loss distributions are in reality no 
more than tools for evaluating risk. Whether they reduce to simple families or 
not and the interpretation of their parameters may not necessarily be issues of 
primary importance. However, what {\em is} important is that the tail behaviour 
may be quite different for distributions that are almost equally well-fitting 
in the central domain containing most of the data. A traditional attempt to 
deal with this is extreme value mixing, drawing on the result due to 
\cite{pickands1975} that over-threshold distributions always become Pareto or 
exponentially distributed as the threshold becomes infinite; consult 
\cite{embrechts1997} for a review of such methods. A more recent contribution 
to such modelling is \cite{lee2012}. PowerBurr fitting may be an alternative 
even here, perhaps by replacing the likelihood by a criterion that emphasizes 
more strongly the fit in the extreme right tail, for example using a weighted
likelihood approach. This will however not be studied here. What will be 
investigated, is how errors in statements of risk, like the solvency capital, 
are linked to estimation errors in the PowerBurr family parameters. In 
particular, estimation uncertainty will increase with the number of parameters, 
which in the PowerBurr is quite large, as the amount of available data 
decreases. An extensive numerical experiment will be conducted, where the error
in $99\%$ solvency capital will be examined. $<$Litt mer her$>$.

\section{The PowerBurr family}\label{sec:sixpardist}

\subsection{Definition}\label{subsec:construction}
The most convenient route to the Burr family is to start with Gamma variables 
$G_\alpha$ and $G_\theta$ with expectation $1$ and shape parameters $\alpha>0$ 
and $\theta>0$. Their standard deviations are sd$(G_\alpha)=1/\sqrt{\alpha}$
and  sd$(G_\theta)=1/\sqrt{\theta}$ from which it follows that
$G_\alpha\overset{p}{\rightarrow} 1$ and $G_\theta\overset{p}{\rightarrow} 1$ as 
$\alpha\rightarrow \infty$ and $\theta\rightarrow\infty$.
The ratio
\begin{equation}
X = \frac{G_{\theta}}{G_{\alpha}}
\label{eqn:betaprime.constr}
\end{equation}
will be called a standard Burr$(\alpha,\theta)$ variable. Two useful 
observations are immediate. If $X$ is $\mbox{Burr}(\alpha,\theta)$, then $1/X$ 
is $\mbox{Burr}(\theta,\alpha)$; i.e the same model except for the role of the 
parameters being switched. Also note that $X$ becomes Gamma-distributed as
$\alpha\rightarrow \infty$ since $G_\alpha\overset{p}{\rightarrow} 1$. The 
construction below extends in many other directions. The probability density 
function of $X$ is given by
\begin{equation}
g(x) = 
\frac{\Gamma(\alpha + \theta)}{\Gamma(\alpha)\Gamma(\theta)}
\left(\frac{\alpha}{\theta}\right)^{\alpha}\frac{x^{\theta-1}}
{(\alpha/\theta+x)^{\theta+\alpha}}, \ \ x > 0 
\label{eqn:betaprime.pdf}
\end{equation}
where $\Gamma(\cdot)$ is the Gamma function. That was how the Burr family was 
defined originally in \cite{Burr1942}. The proof  of \eqref{eqn:betaprime.pdf}
is elementary; consult Section 9.7 in \cite{bolviken2014}. 

\begin{figure}[t]
\centering
\includegraphics[width=0.49\linewidth]{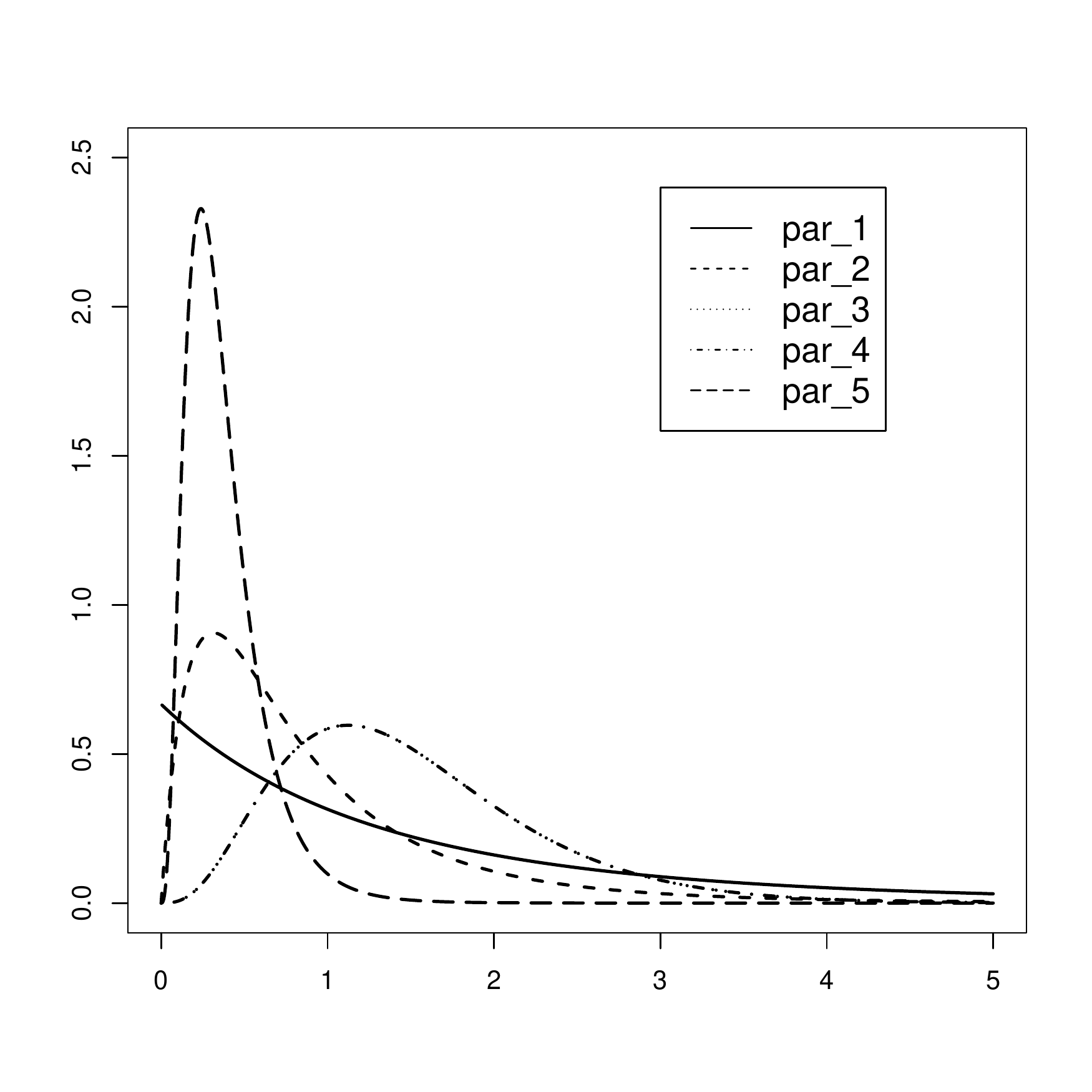}
\includegraphics[width=0.49\linewidth]{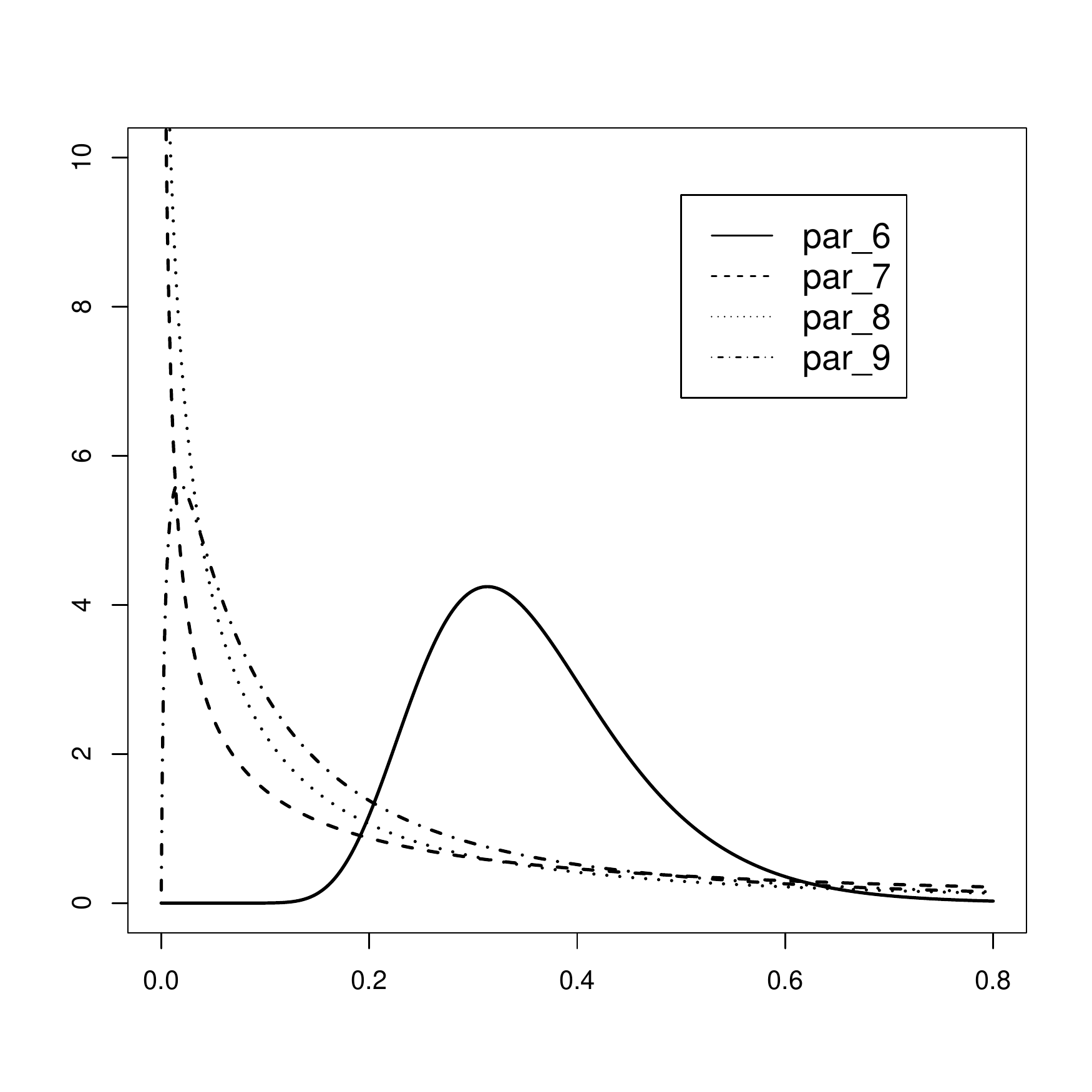}
\caption {Examples of PowerBurr probability density functions. 
The parameter vectors are
$\Phi_1=(5,1,1.5,1,1,1)$, $\Phi_2=(5,2,0.75,1,1,1)$, 
$\Phi_3=(1000,4,1.5,1,1,1)$, $\Phi_4=(1000,4,1000,1000,1.5,1)$, 
$\Phi_5=(1000,4,1,3250,1000,1)$, 
$\Phi_6=(1e6,1000,\exp(-10.6),1,10,100)$, 
$\Phi_7=(1000,1,0.4,2,1,1)$, 
$\Phi_8=(5,2,1.5,2,16,1.2)$ and $\Phi_9=(5,4,1.8,2,16,1.2)$.
}
\label{fig:sixparpdf}
\end{figure}
For the Burr model to be workable in practice it must be augmented with a 
parameter of scale $\beta>0$ so that it becomes
$Z=\beta X$. Our proposal is to replace this relationship with the more general 
\begin{equation}
Z = \beta\{(1 + X/\tau)^{\gamma}-1\}
\hspace*{1cm}\mbox{or even}\hspace*{1cm}
Z = \beta\{(1 + X^{\eta}/\tau)^{\gamma}-1\}
\label{eqn:sixpar.constr}
\end{equation}
with $\gamma>0$, $\tau>0$ and $\eta>0$ as additional parameters. The version on 
the left with parameter vector $\Phi=(\alpha,\theta,\beta,\tau,\gamma)$ will 
be referred to as PowerBurr$_5$, and represents the main thrust of this paper. 
The other, PowerBurr$_6$, has $\eta$ as a sixth parameter, and now
$\Phi=(\alpha,\theta,\beta,\tau,\gamma,\eta)$. In either case $Z>0$, and pure 
Burr reappears when  $\gamma=\tau=\eta=1$. The construction is inspired by the 
classical power transformation in \cite{box64}, but there $\tau=\gamma$ and 
$\eta=1$.

\subsection{Properties}\label{subsec:properties}
\hspace*{-0.6cm}{\em Unimodality} The power transformation in 
\eqref{eqn:sixpar.constr} is strictly increasing everywhere. This is however 
not a guarantee that when one uses it to transform a random variable, the 
probability density function of the new variable has a shape suitable for 
modelling. Yet, that is largely the case here, as the following proposition 
shows.
\begin{proposition} \label{prop:unimod}
 (i) All PowerBurr$_5$
and  PowerBurr$_6$ distributions are
 unimodal if $\gamma\geq 1$.
(ii) The PowerBurr$_5$ distributions are
 unimodal if $\max(\theta,\gamma)\geq 1$.
\end{proposition}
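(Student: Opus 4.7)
The plan is first to reduce unimodality of $f_Z$ in the variable $z$ to unimodality of $g(x)/h'(x)$ in the pre-image variable $x$, since the transformation $h$ in \eqref{eqn:sixpar.constr} is strictly increasing. I would then examine the sign of $T(x) := (\log g - \log h')'(x)$ and show that $T$ changes sign at most once, from positive to negative, on $(0,\infty)$. Plugging in the Burr density \eqref{eqn:betaprime.pdf} and the explicit form of $h'$, then clearing the positive factor $x(\alpha/\theta + x)(\tau + x^\eta)$, $T$ becomes sign-equivalent to
\[
P(x) \;=\; \tau\,Q_1(x) + x^\eta\,Q_2(x), \qquad Q_1(x) = \tfrac{\alpha(\theta-\eta)}{\theta} - (\alpha+\eta)x, \qquad Q_2(x) = \tfrac{\alpha(\theta-\eta\gamma)}{\theta} - (\alpha+\gamma\eta)x,
\]
with $Q_1$ and $Q_2$ linear and strictly decreasing. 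For the PowerBurr$_5$ claim (ii) I specialise to $\eta = 1$, so $P$ becomes a genuine quadratic $-(\alpha+\gamma)x^2 + bx + c$ with $c = \alpha\tau(\theta-1)/\theta$.

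For (ii) the quadratic splits along the sign of $c$. When $\theta \ge 1$, $c \ge 0$ and the downward-opening quadratic has exactly one positive root, giving the required single $+$-to-$-$ sign change of $T$. When $\theta < 1$, $c < 0$; here the hypothesis $\gamma \ge 1$ enters, forcing the linear coefficient $b = \alpha(1 - \gamma/\theta) - \tau(\alpha+1)$ to be strictly negative, so $P$ is negative at the origin with negative slope and hence negative on all of $(0,\infty)$, making $f_Z$ monotonically decreasing.

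For (i) the decisive identity is
\[
Q_1(x) - Q_2(x) \;=\; \eta(\gamma-1)\bigl(\alpha/\theta + x\bigr) \;\ge\; 0,
\]
which is exactly where $\gamma \ge 1$ is used. If $\theta \le \eta$ then $Q_1(0) \le 0$, hence $Q_2(0) \le 0$, so $Q_1, Q_2 \le 0$ throughout and $P \le 0$; the density is monotone decreasing. Otherwise $\theta > \eta$, and I let $x_1, x_2 \ge 0$ denote the smallest abscissae beyond which $Q_1, Q_2$ are non-positive (so $x_2 = 0$ when $Q_2(0) \le 0$). The inequality $Q_1 \ge Q_2$ guarantees $x_2 \le x_1$, which yields $Q_1 > 0$ and $Q_2 > 0$ on $[0,x_2)$ and therefore $P > 0$ on that interval. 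On $[x_2,\infty)$ the derivative
\[
P'(x) \;=\; -\tau(\alpha+\eta) + \eta\,x^{\eta-1}\,Q_2(x) - (\alpha+\gamma\eta)\,x^{\eta}
\]
is a sum of non-positive terms with at least one strictly negative, so $P$ is strictly decreasing from the non-negative value $P(x_2) = \tau Q_1(x_2)$ down to $-\infty$ and crosses zero exactly once. In every scenario $P$ exhibits at most one $+$-to-$-$ sign change, which is unimodality.

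The hard part is (i) for PowerBurr$_6$: because $\eta > 0$ is arbitrary, $P$ is not polynomial and a crude Descartes-style root count is not sharp enough to rule out a spurious second sign change. The two workhorses that I expect will close the argument are the factorisation $P = \tau Q_1 + x^\eta Q_2$ and the pointwise inequality $Q_1 \ge Q_2$ furnished by $\gamma \ge 1$; together they locate $x_2$ relative to $x_1$ and reduce everything to the very clean monotonicity argument for $P'$ on $[x_2,\infty)$.
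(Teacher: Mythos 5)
Your argument is correct, and up to the decisive step it coincides with the paper's: both reduce unimodality to the sign pattern of the same function
\[
P(x) \;=\; -(\alpha+\gamma\eta)x^{\eta+1}+\alpha(1-\gamma\eta/\theta)x^{\eta}-\tau(\eta+\alpha)x+\tau(\theta-\eta)\alpha/\theta,
\]
of which your $\tau Q_1(x)+x^{\eta}Q_2(x)$ is exactly a regrouping. Where you genuinely diverge is in certifying the single $+$-to-$-$ sign change for general $\eta$. The paper factors $P(x)=\alpha(x^{\eta}+\tau)(1-x)-\eta(\gamma x^{\eta}+\tau)(x+\alpha/\theta)$ and recasts $P(x)=0$ as \eqref{eqn:P4}, whose left side is increasing in $x$ when $\gamma\geq 1$ and whose right side is decreasing, so at most one crossing exists; the interior-mode criterion $\theta>\eta$ then falls out by comparing the ranges of the two sides. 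You instead exploit the decomposition into the decreasing linear factors $Q_1,Q_2$, the identity $Q_1-Q_2=\eta(\gamma-1)(\alpha/\theta+x)\geq 0$ to order their roots, and the sign of $P'$ beyond $x_2$. Both routes are complete; the paper's is shorter and its quadratic specialisation also yields the exact PowerBurr$_5$ condition \eqref{eqn:conunim} with no extra work, whereas your derivative argument is more self-contained and makes the role of $\gamma\geq 1$ very explicit. For part (ii) the two proofs essentially agree (downward parabola, $P(0)\geq 0$ when $\theta\geq 1$); the paper disposes of the $\theta<1,\ \gamma\geq 1$ case by citing part (i), while you check directly that $b<0$ forces $P<0$ on $(0,\infty)$. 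The only blemish is your claim of ``exactly one positive root'' when $c\geq 0$: at $\theta=1$ with $b\leq 0$ there is no positive root and the density is monotone decreasing, which is still unimodal, so nothing breaks.
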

\noindent The calculations when the proposition is proved in Section \ref{subsec:proofs} 
shows that the precise condition for PowerBurr$_5$ distributions being unimodal 
is
\begin{equation}
\theta\geq 1\hspace*{1cm}\mbox{or}\hspace*{1cm}
\alpha(1-\gamma/\theta)-\tau(1+\alpha)\leq
2\sqrt{|1-\theta|\tau(\alpha+\gamma)\alpha/\theta}
\label{eqn:conunim}
\end{equation}
which may be too complicated to be of much practical value. If this fails to 
hold, the density function has both a local minimum and a maximum. Examples of 
probability density functions and their shapes are offered in Figure 
\ref{fig:sixparpdf}.
\\\\
{\em Moments and percentiles} 
Moments do not always exist. It emerges from expression 
\eqref{eqn:betaprime.pdf} that $E(X^r)$ is finite if $r<\alpha$, and since 
$X^{\eta\gamma}$ is the leading term in \eqref{eqn:sixpar.constr}, the condition 
for $E(Z^r)$ being finite is $r\eta\gamma<\alpha$. Closed mathematical 
expressions for these moments are not available unless $\theta=1$, and 
numerical methods are needed. Standard one-dimensional quadrature using 
\eqref{eqn:betaprime.pdf} is straightforward, and so is Monte Carlo (see 
below). Similarly, the percentiles of $X$ lack closed formulae unless 
$\theta=1$, and again numerical methods have to be used.
\\\\
{\em Sampling} It is easy to draw from  a PowerBurr distribution when a Gamma 
sampler is available. One simply generates Monte Carlo variates of $G_\theta$ 
and $G_\alpha$, compute their ratio $X$ and inserts that into 
\eqref{eqn:sixpar.constr}.
\\\\
{\em Special cases}
Most standard loss distributions in property insurance are included as special 
cases, which is summarized in the following proposition, verified in Section 
\ref{subsec:proofs}. 
\begin{proposition} \label{prop:speccases}
The PowerBurr$_5$ family contains the Burr, Pareto, Gamma, Inverse Gamma, 
Log-gamma, Logistic, Log-logistic, Weibull, Fr\'{e}chet, and the Log-normal 
families when the parameters $\Phi=(\alpha,\theta, \beta,\tau,\gamma)$ take the
values shown in Table \ref{tab:speccases}.
\end{proposition}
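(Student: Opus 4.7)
The plan is a case-by-case verification: for each of the ten families in Table \ref{tab:speccases}, I substitute the listed values of $\Phi=(\alpha,\theta,\beta,\tau,\gamma)$ into the construction \eqref{eqn:betaprime.constr}--\eqref{eqn:sixpar.constr}, allowing parameters to pass to a limit when required, and check that the distribution of $Z$ coincides with the target family. The ten cases split naturally into two groups.

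The first, easy group is those where the power transformation in \eqref{eqn:sixpar.constr} is degenerate, i.e.\ $\tau=\gamma=1$, so that $Z=\beta X$. Here the work reduces to recognising the distribution of $X=G_\theta/G_\alpha$: the pure Burr family is immediate from \eqref{eqn:betaprime.pdf}, Pareto follows by setting $\theta=1$ and reading \eqref{eqn:betaprime.pdf} as a shifted Pareto density, and Gamma and inverse Gamma appear as the limits $\alpha\to\infty$ and $\theta\to\infty$ respectively, using sd$(G_\alpha)=1/\sqrt{\alpha}\to 0$ noted after \eqref{eqn:betaprime.constr}. Log-logistic arises at $\alpha=\theta=1$, since the ratio of two unit exponentials has CDF $x/(1+x)$.

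The second group uses the genuine power transformation. For Weibull and Fr\'echet the target has the form $\beta' X^{\gamma}$ or its reciprocal, so I would let $\tau\to 0$ while rescaling $\beta$ so that the additive $-\beta$ in \eqref{eqn:sixpar.constr} is absorbed into a new scale constant; combined with $\alpha\to\infty$ and $\theta=1$ (or $\theta\to\infty$ and $\alpha=1$) this reproduces the Weibull and Fr\'echet densities. The log-Gamma case exploits the Box--Cox identity $\lim_{\gamma\to 0}\{(1+u/\tau)^{\gamma}-1\}/\gamma=\log(1+u/\tau)$, so that rescaling $\beta\gamma$ to a fixed constant and sending $\gamma\to 0$ together with $\tau\to 0$ turns $Z$ into a multiple of $\log X$; letting $\alpha\to\infty$ then makes $\log X$ a Gamma variable and hence $Z$ a log-Gamma.

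The subtlest entry is the log-normal, which needs two coupled limits. By the central limit theorem applied to $G_\alpha$ and $G_\theta$, as $\alpha,\theta\to\infty$ with $1/\alpha+1/\theta\to\sigma^2$ one has $X\to 1+\sigma N$ with $N\sim\mathcal{N}(0,1)$; combining this with the Box--Cox limit $\gamma\to 0$ and a suitable choice of $\tau$ produces an argument linear in $N$, which the outer exponentiation implicit in \eqref{eqn:sixpar.constr} then maps to a log-normal variable. The logistic case is handled analogously, with the CLT step replaced by $\alpha=\theta=1$, so that $\log X$ is already standard logistic. The main obstacle throughout is the bookkeeping in these joint limits, ensuring that the additive $-1$ and the parameters $\beta,\tau,\gamma$ combine cleanly into a pure location-scale pair of the target family; once that is done, Table \ref{tab:speccases} simply records the resulting parameter correspondence.
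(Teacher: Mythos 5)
Your overall strategy (case-by-case verification, with the pure-Burr cases read off directly and the remaining families obtained as parameter limits, plus a CLT for the log-normal) is the same as the paper's, and your treatment of Burr, Pareto, Gamma, Inverse Gamma, Weibull and Fr\'echet is essentially correct. However, three of the ten cases contain genuine errors. First, the log-logistic does not belong in your ``degenerate'' group: with $\tau=\gamma=1$ and $\alpha=\theta=1$ you get $\bar F(z)=1/(1+z/\beta)$, which is only the log-logistic with shape $a=1$. The general shape parameter comes precisely from the power transformation; the paper sets $\beta=b\tau^\gamma$, lets $\tau\to 0$ so that $Z\to bX^\gamma$, and identifies $\gamma=1/a$. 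Second, your log-Gamma route is backwards. The target is $\log(1+Z)=bG_c$, i.e.\ $Z$ is the \emph{exponential} of a Gamma variable minus one, so the Box--Cox transform must degenerate into an exponential, which happens as $\gamma\to\infty$ with $\tau=\gamma/b$ (giving $Z=(1+bX/\gamma)^\gamma-1\to e^{bX}-1$), not as $\gamma\to 0$, which produces a logarithm. Moreover ``letting $\alpha\to\infty$ makes $\log X$ a Gamma variable'' is false: it makes $X$ itself converge to $G_\theta$. The $\gamma\to 0$ logarithmic limit is the mechanism for the \emph{logistic} entry, not the log-Gamma; you have effectively swapped the two.

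Third, and most seriously, your log-normal argument cannot work as stated. Your normalisation $1/\alpha+1/\theta\to\sigma^2>0$ is incompatible with $\alpha,\theta\to\infty$ (at least one of them would have to stay bounded, and then the CLT step fails), and you again invoke ``the Box--Cox limit $\gamma\to 0$'', whereas producing a log-normal requires exponentiating a Gaussian, i.e.\ $\gamma\to\infty$. In the paper one takes $\tau=\sqrt{\gamma}$, $\gamma=\theta\sigma^2$ and $\beta=e^{\xi+1/2-\sqrt{\gamma}}$, expands $\gamma\log(1+X/\sqrt{\gamma})=\sqrt{\gamma}X-X^2/2+\gamma R$, and shows $\sqrt{\gamma}(X-1)=\sigma\sqrt{\theta}(G_\theta/G_\alpha-1)\overset{d}{\to}N(0,\sigma)$ \emph{only under the extra condition} $\theta/\alpha\to 0$ (the footnote to Table \ref{tab:speccases}), together with $\beta\to 0$, $X^2-1\overset{p}{\to}0$ and $\gamma R\overset{p}{\to}0$. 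Your sketch neither identifies the correct coupling of $\tau$ and $\gamma$ nor the $\theta/\alpha\to 0$ condition, and the ``bookkeeping'' you defer is exactly where the content of this case lies. As it stands, the proposal establishes seven of the ten inclusions and would need to be substantially reworked for the log-logistic, log-Gamma and log-normal entries.
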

Nine of the ten families in Table \ref{tab:speccases} are defined in terms of 
shape parameters $a$ and $c$ and a parameter of scale $b$, some of them through stochastic representation, others through the survival function 
$\bar{F}(z)=\mbox{Pr}(Z>z)$. Only the definition of the Log-normal is 
different. The table shows how $\alpha$, $\theta$, $\beta$, $\tau$ and $\gamma$ 
must be defined for the various special cases to appear. Many of them are 
located at the boundary of the parameter space and are only reached in the 
limit. As an example consider the Log-gamma family on row $5$ which appears 
when $\tau=\gamma/b$, $\gamma\rightarrow\infty$ and $\alpha\rightarrow\infty$. 
This means that if PowerBurr$_5$ is fitted Log-gamma losses with enough data, 
the computer will return large values for all three of $\tau$, $\gamma$ and 
$\alpha$ with the scale parameter $b$ approximately the ratio $\gamma/\tau$.
\begin{table}[t]
\small
\begin{tabular}{rlcccccc}
\multicolumn{8}{l}{\footnotesize $^*$ For the log-normal
limit to be true $\alpha$ must tend to $\infty$ 
faster than $\theta$ in the sense that
$\theta/\alpha\rightarrow 0$.}\\
\hline
&&Definition&$\alpha$&$\theta$&$\tau$&$\gamma$&$\beta$\\
1&Burr&$Z=b G_c/G_a$&$a$&$c$&$1$&$1$&$b$\\
2&Pareto&$\bar{F}(z)=1/(1+z/b)^a$&$a$&$1$&$1$&$1$&$b$\\
3&Gamma&$Z=b G_c$&$\rightarrow \infty$&$c$&$1$&$1$&$b$\\
4&Inverse Gamma&$Z=b/G_a$&$a$&$\rightarrow \infty$
&$1$&$1$&$b$\\
5&Log-gamma&$\log(1+Z)=b G_c$&$\rightarrow \infty$&$c$&
$\gamma/b$&$\rightarrow \infty$&$1$\\
6&Logistic&$\bar{F}(z)=2/(1-a+ae^{z/b})$&$1$&$\rightarrow \infty$&$a$
&$\rightarrow 0$&$b/\gamma$\\
7&Log-logistic&$\bar{F}(z)=1/\{1+(z/b)^a\}$&$\rightarrow \infty$&$1$&
$\rightarrow 0$&$1/a$&$b\tau^\gamma$\\
8&Weibull&$Z=bG_1^{a}$&$1$&$\rightarrow \infty$&$\rightarrow 0$&
$a$&$b\tau^{\gamma}$\\
9&Fr\'{e}chet&$\bar{F}(z)=1-e^{-(z/b)^{-a}}$&$\rightarrow \infty$&1&
$\rightarrow 0$&$a$&$b\tau^{\gamma}$\\
10&Log-normal$^*$&$\log(Z)\sim N(\xi,\sigma)$&$\rightarrow\infty$&
$\rightarrow \infty$&$\sqrt{\theta}\sigma$&$\theta\sigma^2$&
$e^{-\sqrt{\theta}\sigma+\xi+1/2}$\\
\hline
\end{tabular}
\caption{Special cases  of PowerBurr$_5$ 
showing how its parameters are defined in terms of 
shape ($a$ and $c$) and scale ($b$)
of the parent distribution with the log-normal defined differently.}
\label{tab:speccases}
\end{table}
\\\\
{\em Identifiability} A stochastic model is identifiable if different parameter 
vectors $\Phi$ always lead to different distributions. That is not the case 
here. For example, if  $\gamma=1$,~(\ref{eqn:sixpar.constr}) shows that $\beta$ 
and  $\tau$ affect $Z$ through their ratio $\beta/\tau$. It follows that if 
$\gamma=1$, the likelihood method in the next section will pick a pair 
$(\beta,\tau)$ with the right ratio. Which  one does not matter since we in 
applications rarely are interested in the parameters themselves.

\subsection{Fitting}\label{subsec:fitting}
Models have in this paper been fitted by maximum likelihood, which requires
a mathematical expression for the probability density function $f(z;\Phi)$ of 
$Z$. Using the logarithmic form in \eqref{eqn:P0} below and inserting 
\eqref{eqn:betaprime.pdf} and \eqref{eqn:sixpar.constr}, straightforward
calculations lead to
\begin{align}
\begin{split}
\log\{f(z;\Phi)\} = &\log\{C(\Phi)\}+(\theta-\eta)\log(x)-(\alpha+\theta)\log(\alpha/\theta+x)\\
&+(\gamma-1)\log(1+x^\eta/\tau)
\end{split}
\label{eqn:loglike1}
\end{align}
where
\begin{equation}
C(\Phi)=\frac{\Gamma(\alpha+\theta)\tau(\alpha/\theta)^{\alpha}}
{\Gamma(\alpha)\Gamma(\theta)\gamma\beta\eta}
\hspace*{1cm}\mbox{and}\hspace*{1cm}x=\{(z/\beta+1)^{1/\gamma}-1\}\tau^{1/\eta}.
\label{eqn:loglike2}
\end{equation}
The relationship between $z$ and $x$ is the inverse of 
\eqref{eqn:sixpar.constr}. When historical losses $z_1,\dots,z_n$ are given, 
$\log\{f(z_i;\Phi)\}$ must be computed for each observation and added for the 
log-likelihood function
\begin{equation}
{\cal L}(\Phi;z_1,\dots,z_n)=\log\{f(z_1;\Phi)\}+\dots+\log\{f(z_n;\Phi)\}
\label{loglike3}
\end{equation}

Optimization was in this paper carried out by the quasi-Newton option in the 
R-function \texttt{optim()}. This is straightforward for two-parameter models,
but more challenging when there are five or even six parameters and a 
log-likelihood function that may be quite flat, especially when $n$ is 
moderate. This means that widely different parameter values result in rather 
similar distributions. Consequently, the optimisation of the likelihood is 
sometimes challenging, but this is not that problematic when the resulting 
distributions are sensible. Still, we have derived the derivatives of the 
log-likelihood functions with respect to all parameters, and supply those to 
\texttt{optim()}. These are given in the Appendix. To ease the optimisation 
further, it is performed on the log-transformed parameters as all the 
parameters $\alpha,\theta,\beta,\eta,\tau,\gamma$ are positive. Finally, we do 
the optimisation with several sets of start values for the parameters, and 
choose the parameter estimates that give the highest likelihood value. More 
numerical details are given in Appendix 2.

\subsection{Proofs of propositions}\label{subsec:proofs}
\begin{proof}[Proof of Proposition \ref{prop:unimod}] 
It is convenient to drop the parameter vector  and let $f(z)$ be the 
probability density function for $Z$ rather than $f(z;\Phi)$ as above. Its 
relationship to the probability density $g(x)$ for $X$ is through
\begin{align*}
f(z)=g(x)\,\frac{dx}{dz}
\end{align*}
and since $dx/dz=(dz/dx)^{-1}$, it follows that
\begin{equation}
\log\{f(z)\}=\log\{g(x)\}-\log\left(\frac{dz}{dx}\right)
\label{eqn:P0}
\end{equation}
so that
\begin{equation}
\frac{d\log\{f(z)\}}{dz}=
\frac{dx}{dz}\left(\frac{d\log\{g(x)\}}{dx}-\frac{d\log(dz/dx)}{dx}\right).
\label{eqn:P1}
\end{equation}
Recall that $g(x)=cx^{\theta-1}/(\alpha/\theta+x)^{\theta+\alpha}$ where $c$ is a 
constant. Hence
\begin{equation}
\frac{d\log\{g(x)\}}{dx}=\frac{\theta-1}{x}-\frac{\theta+\alpha}
{\alpha/\theta+x},
\label{eqn:P2}
\end{equation}
and moreover $z=\beta\{(1+x^\eta/\tau)^\gamma-1\}$ so that
\begin{align*}
\frac{dz}{dx}=\frac{\beta\gamma\eta}{\tau}(1+x^\eta/\tau)^{\gamma-1}x^{\eta-1}
\end{align*}
and
\begin{align*}
\log\left(\frac{dz}{dx}\right)=\log\left(\frac{\beta\gamma\eta}{\tau}\right)
+(\gamma-1)\log(1+x^\eta/\tau)+(\eta-1)\log(x).
\end{align*}
Differentiating this yields
\begin{equation}
\frac{d\log(dz/dx)}{dx}=\frac{(\gamma-1)\eta x^{\eta-1}/\tau}{1+x^\eta/\tau}
+\frac{\eta-1}{x}
\label{eqn:P3}
\end{equation}
which is with \eqref{eqn:P2}) to be inserted into \eqref{eqn:P1}. Then
\begin{align*}
\frac{d\log\{f(z)\}}{dz}=\frac{dx}{dz}\left(
\frac{\theta-\eta}{x}-\frac{\theta+\alpha}{\alpha/\theta+x}
+\frac{(1-\gamma)\eta x^{\eta-1}/\tau}{\tau+x^{\eta}}\right),
\end{align*}
and after some straightforward manipulations this can be rewritten
\begin{align*}
\frac{d\log\{f(z)\}}{dz}=\frac{dx}{dz}\,\frac{P(x)}{x(\alpha/\theta+x)(\tau+x^\eta)}
\end{align*}
where
\begin{align*}
P(x) = -(\alpha+\gamma\eta)x^{\eta+1}+\alpha(1-\gamma\eta/\theta)x^{\eta}-\tau(\eta+\alpha)x+\tau(\theta-\eta)\alpha/\theta.
\end{align*}
The sign of the derivative is determined by $P(x)$ since the all other factors 
are positive. A more useful form is
\begin{align*}
P(x) = \alpha(x^{\eta}+\tau)(1-x)-\eta(\gamma x^{\eta}+\tau)(x+\alpha/\theta)
\end{align*}
which is established by multiplying out both factors and collecting the terms.
A solution of the equation $P(x) = 0$ must therefore satisfy
\begin{equation}
\eta\frac{\gamma x^{\eta}+\tau}{x^{\eta}+\tau} = \alpha\frac{1-x}{x+\alpha/\theta}.
\label{eqn:P4}
\end{equation}
When $\gamma \geq 1$, the left hand side of \eqref{eqn:P4} is monotonically 
increasing in $x$ whereas the right hand side is monotonically decreasing, and 
there is at most one solution to the equation. This means that 
$d\log\{f(z)\}/dz$ has zero or one root, and $f(z)$ is monotonically decreasing 
everywhere or has a single mode. The latter applies when the maximum of the 
right hand side of \eqref{eqn:P4} is larger than the minimum on the left, i.e. 
when $\theta > \eta$. This verifies  the first part of Proposition 
\ref{prop:unimod}. 

To address the second part concerning PowerBurr$_5$ let  $\eta=1$. Now $P(x)$ 
becomes
\begin{align*}
P(x) = -(\alpha+\gamma)x^2+\{\alpha(1-\gamma/\theta)-
\tau(1+\alpha)\}x+\tau(\theta-1)\alpha/\theta.
\end{align*}
which is a second order polynomial with negative quadratic term. Such functions 
have a single maximum, and if $\theta\geq 1$ so that $P(0)\geq 0$, there can 
be no more than one solution of the equation $P(x)=0$  so that unimodality 
still holds. The condition can be expressed as $\max(\theta,\gamma)\geq 1$, as 
claimed in part (ii) of Proposition \ref{prop:unimod}.

To derive the precise algebraic condition \eqref{eqn:conunim} we must examine 
$P(x)$ in detail. Elementary calculations show that its maximum occurs at
\begin{align*}
x_m = \frac{\alpha(1-\gamma/\theta)-\tau(1+\alpha)}
{2(\alpha+\gamma)}
\end{align*}
with maximizing value
\begin{align*}
P(x_m) = \frac{\{\alpha(1-\gamma/\theta)-\tau(1+\alpha)\}^2}
{4(\alpha+\gamma)}
+\tau(\theta-1)\alpha/\theta.
\end{align*}
The  condition for the equation $P(x)=0$ to have a single, positive  root or no 
positive root at all is that either $P(0)\geq 0$ so that $\theta\geq 1$ as in 
\eqref{eqn:conunim} left or either of $x_m\leq 0$ and $P(x_m)\leq 0$ being 
true. The latter pair of conditions can be expressed jointly as
\begin{align*}
\alpha(1-\gamma/\theta)-\tau(1+\alpha)\leq
2\sqrt{|1-\theta|\tau(\alpha+\gamma)\alpha/\theta}
\end{align*}
as in \eqref{eqn:conunim} right.
\end{proof}
\begin{proof}[Proof of Proposition \ref{prop:speccases}]
One immediately sees that the Burr and Pareto distributions themselves are 
included in the PowerBurr family. The Gamma distribution is obtained by letting 
$\alpha\rightarrow \infty$ so that $G_\alpha\overset{p}{\rightarrow} 1$. By 
Slutsky's theorem this implies $X\overset{d}{\rightarrow}\beta G_\theta$ which 
defines the Gamma family. The Inverse Gamma is obtained in a similar way. Now 
$G_\theta\overset{p}{\rightarrow} 1$ as  $\theta\rightarrow \infty$ and 
$X\overset{d}{\rightarrow}\beta/G_\alpha$ on row $4$ in Table 
\ref{tab:speccases}. 

In order to get the Log-Gamma or row $5$ insert $\beta=1$ and $\tau=\gamma/b$ 
into \eqref{eqn:sixpar.constr}. Then
\begin{displaymath}
Z=(1+bX/\gamma)^\gamma-1\rightarrow e^{bX}-1
\hspace*{1cm}\mbox{as}\hspace*{1cm} \gamma\rightarrow \infty,
\end{displaymath}
and it follows that $\log(1+Z)\overset{p}{\rightarrow}bX$. Since
$X\overset{p}{\rightarrow}G_\theta$ as $\alpha\rightarrow \infty$,
$\log(1+Z)\overset{p}{\rightarrow}bG_\theta$ and
$\log(1+Z)\overset{d}{\rightarrow}bG_\theta$. 

The logistic distribution on row $6$ emerges when $\beta=b/\gamma$ so that by 
an elementary application of l'H\^{o}pital's rule
\begin{displaymath}
Z=b\,\frac{(1+X/\tau)^\gamma-1}{\gamma}\rightarrow b\log(1+X/\tau)
\hspace*{1cm}\mbox{as}\hspace*{1cm} \gamma\rightarrow 0,
\end{displaymath}
which yields
\begin{displaymath}
\mbox{Pr}(Z>z)\rightarrow\mbox{Pr}(\log(1+X/\tau)> z/b)
=\mbox{Pr}\left(X>\tau(e^{z/b}-1)\right).
\end{displaymath}
Let $\alpha=\theta=1$ so that $X$ follows an ordinary Pareto distribution with 
parameters $a=b=1$. As remarked in in Section \ref{subsec:construction}, this
is also the distribution of $1/X$ which means that the limit on the right is
$<$Er det en $1-$ for mye i første likhet?$>$
\begin{displaymath}
\mbox{Pr}(1/X>\tau(e^{z/b}-1))=1-\frac{1}{1+\tau^{-1}(e^{z/b}-1)^{-1}}
=\frac{1}{1-\tau+\tau(e^{z/b}-1)}
\end{displaymath}
and in conclusion
\begin{displaymath}
\mbox{Pr}(Z>z)\rightarrow \frac{1}{1-\tau+\tau(e^{z/b}-1)}
\hspace*{1cm}\mbox{as}\hspace*{1cm} \gamma\rightarrow 0.
\end{displaymath}
The logistic model on row $6$ follows when $\tau=1/2$. This is the definition 
used in \cite{beir96}, but one could also contemplate an extended version with 
$\tau$ as a general parameter.

To obtain the log-logistic, the Weibull and the Frech\`{e}t on rows $7$, $8$ 
and $9$ one starts with  $\beta=b\tau^{\gamma}$. Then 
\begin{displaymath}
Z=b\tau^\gamma\{(1+X/\tau)^\gamma-1\}\rightarrow bX^\gamma
\hspace*{1cm}\mbox{as}\hspace*{1cm} \tau\rightarrow 0.
\end{displaymath}
When $\theta=\alpha=1$, $X$ follows an ordinary Pareto distribution, such that
\begin{displaymath}
\mbox{Pr}(Z>z)\rightarrow \mbox{Pr}(X>(z/b)^{\gamma})=
\frac{1}{1+(z/b)^{1/\gamma}}
\end{displaymath}
which is the expression on row $7$ when $\gamma=1/a$. 

The Weibull model is defined as $Z=bG_1^a$, where $G_1$ is the exponential 
distribution with mean $1$, and emerges when $\theta=1$, 
$\alpha\rightarrow \infty$, $\gamma=a$ and $\tau\rightarrow 0$.

The Frech\`{e}t is obtained with the same specifications except that now 
$\theta\rightarrow \infty$ and $\alpha=1$. Then
\begin{displaymath}
\mbox{Pr}(Z>z)\rightarrow \mbox{Pr}(1/G_1>(z/b)^{\gamma})=
1-e^{-(z/b)^{-a}}
\end{displaymath}
as on row $9$.

Finally, to get the Log-normal on row $10$, let $\beta=e^{\xi+1/2-\sqrt{\gamma}}$ 
and $\tau=\sqrt{\gamma}$ so that
\begin{displaymath}
Z=\beta(1+X/\tau)^\gamma-\beta=e^{\xi+1/2-\sqrt{\gamma}+\gamma\log(1+X/\sqrt{\gamma})}-\beta.
\end{displaymath}
Taylor's formula with two terms and remainder yields
\begin{displaymath}
\log(1+X/\sqrt{\gamma})=X/\sqrt{\gamma}-X^2/(2\gamma)+R
\hspace*{1cm}\mbox{where}\hspace*{1cm}
0<R<\frac{1}{6}\frac{X^3}{\gamma^{3/2}}.
\end{displaymath}
The bounds for $R$ are consequences of $X>0$ and imply 
$\gamma R\overset{p}{\rightarrow} 0$ as $\gamma\rightarrow\infty$. It now 
follows after a little reorganization that
\begin{equation}
Z=e^{\xi+\sqrt{\gamma}(X-1)+(X^2-1)/2+\gamma R}-\beta.
\label{eqn:P5}
\end{equation}
Insert $\gamma=\theta\sigma^2$ so that
\begin{displaymath}
\sqrt{\gamma}(X-1)=\sigma\sqrt{\theta}\left(\frac{G_\theta}{G_\alpha}-1\right)
=\sigma\frac{\sqrt{\theta}(G_\theta-1)+(\sqrt{\theta/\alpha})
\sqrt{\alpha}(G_\alpha-1)}{G_\alpha}.
\end{displaymath}
As $\theta\rightarrow\infty$ and $\alpha\rightarrow\infty$, 
$\sqrt{\theta}(G_\theta-1)\overset{d}{\rightarrow} N(0,1)$ and 
$\sqrt{\alpha}(G_\alpha-1)\overset{d}{\rightarrow} N(0,1)$. In the denominator,
$G_\alpha\overset{p}{\rightarrow} 1$ from which it follows by Slutsky's theorem
that $\sqrt{\gamma}(X-1)\overset{d}{\rightarrow} N(0,\sigma)$ if both $\theta$ 
and $\alpha$ $\rightarrow \infty$ and $\theta/\alpha\rightarrow 0$. Further, 
$\beta\rightarrow 0$ as $\gamma\rightarrow\infty$. Since 
$X^2-1\overset{p}{\rightarrow} 0$ and $\gamma R\overset{p}{\rightarrow} 0$, the 
exponent in \ref{eqn:P5} tends to $N(\xi,\sigma)$ in distribution, as was to be 
proved.
\end{proof}
\section{Total loss model}\label{sec:model}

As mentioned in the introduction, we assume the classic collective model for
the total loss $\X$, given by \eqref{eqn:total_loss}, where the claim sizes
$Z_{i}$ are independent and identically distributed, and also independent of
the number $\N$ of claims. Assuming that $Z_{i}$s from different policies 
are identically distributed is obviously a simplification. However, it may
be argued that when the main interest is measuring risk in terms of the
total loss $\X$, and not pricing of individual contracts, differences
between policies will be averaged out.

Further, we assume that the number of claims follows a Poisson distribution
with fixed intensity $\mu$ for all policies, i.e. $\N \sim Poisson(\mu A)$,
where $A$ is the total risk exposure of the portfolio, i.e. the number of
policy years in the portfolio during the priod over which the claims are
counted. This is also a simplification, but as noted earlier, the main focus
of our paper is on the claim size distribution. 

Finally, the claims sizes are assumed to follow a distribution with
parameters $\B{\theta}$ and pdf $f_{Z}(z;\B{\theta})$, more specifically the
six-parameter distribution or one of its special cases. The model for the
total loss is then such that there is no explicit expression for the reserve
for any of the claim severity distributions we consider. Therefore, these 
have to be estimated with Monte Carlo methods. The first step is then to 
generate $m$ independent samples $\X_{1}^{*},\ldots,\X_{m}^{*}$ from the 
model, as specified in Algorithm \ref{alg:reserve_cte}. Subsequently, the 
reserve is estimated by
\begin{equation}
q_{\epsilon}^{*} = \X_{((1-\epsilon)m)}^{*},
\label{eqn:var}
\end{equation}
where $\X_{(1)}^{*}\leq\ldots\leq\X_{(m)}^{*}$. In this
setting, the total risk exposure is $A=JT$, where $J$ is the current number
of policies in the portfolio and $T=1$ year. In addition, we are interested
in the $100\cdot(1-\epsilon)\%$ quantile $q_{\epsilon}^{Z}$ of the claim
size distribution, given by
\[
\Prob(Z > q_{\epsilon}^{Z}) = \epsilon.
\]
This will indicate how well our model captures the tail properties of the
claim severity distribution and may help us to understand the performance
of our model for estimating the reserve. For the extended Pareto distribution,
we compute $q_{\epsilon}^{Z}$ with a numeric search algorithm and for the 
six-parameter distribution, it is computed by \eqref{eqn:quantfunc}.

Even though the six-parameter distribution is very flexible, it will never
be the exact true data generating process $g(\cdot)$. Hence, there will always be 
some model error in estimates of the reserve from this distribution. If
the parameters $\B{\theta}$ are estimated by the maximum likelihood,
the estimate $\hat{\B{\theta}}$ will converge to the value of $\B{\theta}_{0}$
that minimises the Kullback-Leibler divergence 
\[
KL(g,f) = \int_{0}^{\infty}log\left(\frac{g(z)}{f(z;\B{\theta})}\right)g(z)dz
\]
from the true $g$ to the model $f$. Now, let $q_{\epsilon}(\B{\theta})$ be the
reserve based on $f(\cdot;\B{\theta})$, and $q_{\epsilon}^{*}(\hat{\B{\theta}})$
the corresponding Monte Carlo estimate. Then we have:
\begin{align*}
q_{\epsilon}^{*}(\hat{\B{\theta}})-q_{\epsilon} = & (q_{\epsilon}^{*}(\hat{\B{\theta}})-q_{\epsilon}(\hat{\B{\theta}}))+(q_{\epsilon}(\hat{\B{\theta}})-q_{\epsilon}(\B{\theta}_{0}))+(q_{\epsilon}(\B{\theta}_{0})-q_{\epsilon})\\
= & E_{1}+E_{2}+E_{3},
\end{align*}
where $E_{1}$ is the Monte Carlo error, $E_{2}$ is the estimation error and 
$E_{3}$ is the model error. The former, $E_{1}$, depends on the number of 
simulations $m$, and will converge to $0$ as $m$ tends to $\infty$. Hence,
the Monte Carlo error will be quite small as long as one chooses a large
enough $m$. Further, the model error $E_{3}$ depends on the Kullback-Leibler 
divergence form the true distribution to the assumed distribution. As the
six-parameter distribution is very malleable, this error is also likely
to be small. We may thus assume that the estimation error $E_{2}$, which is 
a function of the sample size, is the main source of error.  

\begin{algorithm}[H]
	\caption{\label{alg:reserve_cte}}
	Input: $J$, $\mu$, $T$, $\B{\theta}$, $f_{Z}(\cdot;\B{\theta})$, $m$
	\begin{algorithmic}[1]
		\For{$i = 1,\ldots,m$}
		\State $\X_{i}^{*} = 0$
		\State Draw $\N^{*} \sim Poisson(J\mu T)$
		\For{$j = 1,\ldots,\N^{*}$}
		\State Draw $Z^{*}$ from $f_{Z}(z;\B{\theta})$
		\State $\X_{i}^{*} = \X_{i}^{*}+Z^{*}$
		\EndFor
		\EndFor
		\State Return $\X_{1}^{*},\ldots,\X_{m}^{*}$
	\end{algorithmic}
\end{algorithm}

\section{Simulation study}\label{sec:simstudy}

The proposed claim size distribution is more flexible than its two-parameter
special cases. The question is how well it estimates the reserve, and how this 
estimate is affected by the increased parameter uncertainty due to the larger 
number of parameters. In order to investigate this, we conduct a large simulation 
study. The parameter settings for the study are presented in Section 
\ref{subsec:parset}, some remarks about parameter estimation are given in 
Section \ref{subsec:estimation} and the results from the study are given in 
Section \ref{subsec:results}.

\subsection{Parameter settings}\label{subsec:parset}

In all, we consider $10$ different claim size distributions: the log-normal,
the log-Gamma, the Weibull, the Pareto, the Gamma, the extended Pareto, the 
four-parameter special case with $\beta=\tau=1$, the two five-parameter special 
cases with $\eta=1$ and $\tau=1$, respectively, and the full 
six-parameter distribution. Each experiment is performed as described in
Algorithm \ref{alg:exp}. One of the $10$ distributions is the true claim 
size distribution with pdf $f_{Z}(\cdot;\B{\theta})$. First, a sample 
$z_{1},\ldots,z_{n}$ of size $n$ is drawn from this distribution. Then, the 
parameters of each of the $10$ distributions considered in this study are 
estimated based on the sample, resulting in the estimates $\hat{\B{\theta}}_{1},\ldots,\hat{\B{\theta}}_{10}$. Finally, one estimates 
the reserve $q_{\epsilon}$ and the quantile $q_{\epsilon}^{Z}$ of the
claim size distribution using Algorithm \ref{alg:reserve_cte} with each 
$\hat{\B{\theta}}_{i}$ and corresponding $f_{Z,i}(\cdot;\hat{\B{\theta}}_{i})$. 
This procedure is repeated $N$ times. Note that the parameter of the claim 
frequency distribution is not estimated, but set to its true value. This is 
done in order to isolate the uncertainty in the reserve stemming from the 
claim size distribution.

\begin{algorithm}[H]
	\caption{\label{alg:exp}}
	Input: $J$, $\mu$, $\B{\theta}$, $f_{Z}(\cdot;\B{\theta})$, $m$
	\begin{algorithmic}[1]
		\For{$k = 1,\ldots,N$}
		\For{$j = 1,\ldots,n$}
		\State Draw $Z^{*}$ from $f_{Z}(z;\B{\theta})$
		\EndFor
		\For{$i = 1,\ldots,10$}
		\State Estimate $\hat{\B{\theta}}_{i}$
		\State Estimate $(\hat{q}_{\epsilon,k,i}^{*},\hat{q}_{\epsilon,k,i}^{Z}$ using Algorithm \ref{alg:reserve_cte} and \eqref{eqn:var} with $\B{\theta}=\hat{\B{\theta}}_{i}$.
		\EndFor
		\EndFor
		\State Return $(\hat{q}_{\epsilon,k,i}^{*},\hat{q}_{\epsilon,k,i}^{Z}), \ i=1,\ldots,10, \ k=1,\ldots,N$.
	\end{algorithmic}
\end{algorithm}

The parameter values used in the study are shown in Table 
\ref{tab:parameters}. These are chosen  so that 
$\E(Z) \approx 1$, except for the log-Gamma distribution, for which this 
was impossible without making $\sd(Z)$ very small. Further, the parameter
values are set so that the corresponding distributions range from medium
to very heavy-tailed. Figure \ref{fig:simstudypdfs} shows the corresponding 
pdfs. For the claim frequency distribution (which is Poisson), we let 
$\lambda=J\mu T$ take each of the values $10, \ 100, \ 1,000,$ to 
represent variations in portfolio size and claim intensity.

\begin{table}[t]
\begin{center}
\begin{tabular}{llcc}
\hline
\hline
\textsc{Distribution} & \textsc{$\B{\theta}$} & \textsc{$\E(Z)$} & \textsc{$\sd(Z)$}\\
\hline
$Log-normal(\xi,\sigma)$ & $(-0.5,1)$ & $1.00$ & $1.31$\\ 
$Log-Gamma(\xi,\theta)$ & $(0.75,5)$ & $1.25$ & $0.93$\\ 
$Weibull(\beta,\eta)$ & $(2,1.13)$ & $1.00$ & $0.52$\\ 
$Pareto(\alpha,\beta)$ & $(3,2)$ & $1.00$ & $1.77$\\ 
$Gamma(\xi,\alpha)$ & $(1,2)$ & $1.00$ & $0.71$\\ 
$Ext. Pareto(\alpha,\theta,\beta)$ & $(3,2,1)$ & $1.50$ & $2.22$\\ 
$Four-parameter(\alpha,\theta,\beta,\eta)$ & $(4,2,0.6,1.3)$ & $1.02$ & $1.81$\\ 
$Five-parameter(\alpha,\theta,\beta,\tau,\gamma)$ & $(4,2,2.7,5,1.3)$ & $1.00$ & $1.27$\\ 
$Five-parameter2(\alpha,\theta,\beta,\eta,\gamma)$ & $(4,2,0.5,1.2,1.1)$ & $0.94$ & $1.60$\\ 
$Four-parameter(\alpha,\theta,\beta,\eta,\tau,\gamma)$ & $(4,2,4,1.3,10,1.2)$ & $0.86$ & $1.85$\\
\hline
\hline
\end{tabular}
\caption{Parameters used for the $10$ distributions in the simulation study, as well as the mean and standard deviation of the corresponding distributions.}
\label{tab:parameters}
\end{center}
\end{table}

\begin{figure}[t]
\centering
\includegraphics[width=0.49\linewidth]{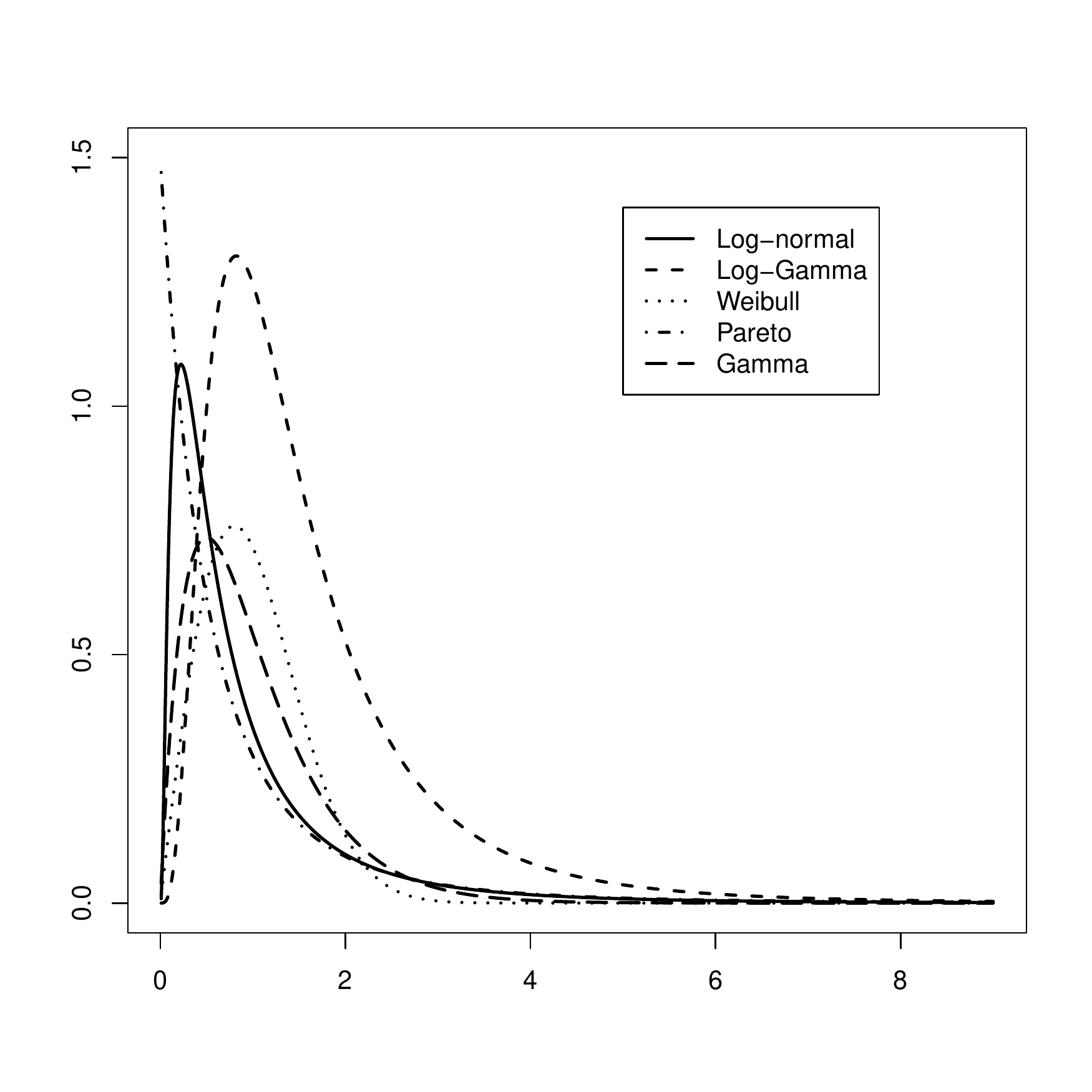}
\includegraphics[width=0.49\linewidth]{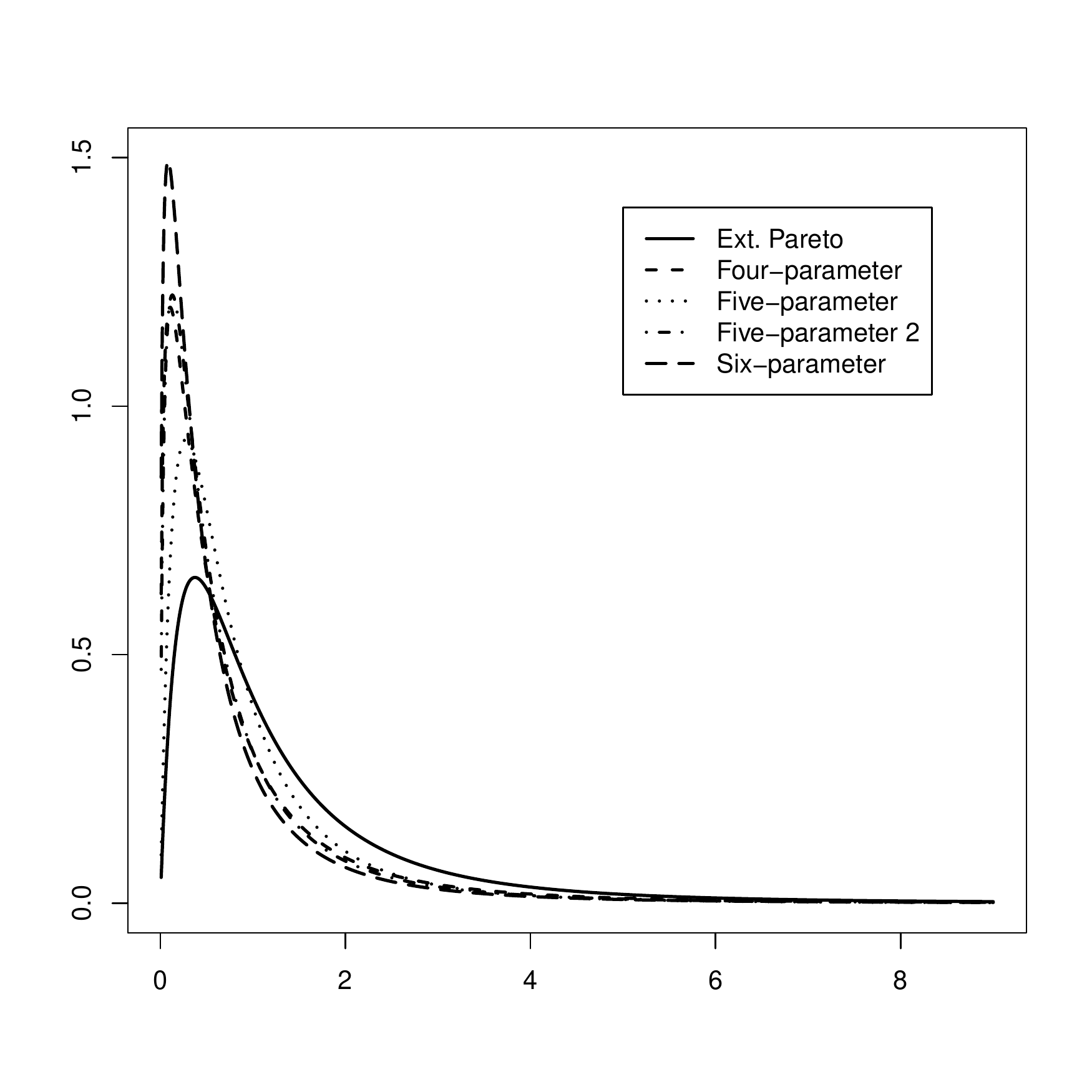}
\caption{Pdfs of the $10$ distributions considered in the simulation study.}
\label{fig:simstudypdfs}
\end{figure}

To assess the effect of the sample size, we have run the experiments for 
each of $n = 5,000, \ 500, \ 50$, which corresponds to data sets ranging 
from rather large to rather small, but realistic, depending on the line of 
business. Further, we used $m=100,000$ in the Monte Carlo estimation of the 
reserve. Finally, we let $N=1,000$ in all experiments.

\subsection{Parameter estimation}\label{subsec:estimation}

All the distribution parameters are estimated by maximum likelihood
estimation, using the R function \texttt{optim()} to do 
quasi-Newton optimisation. For the two-parameter families, this
optimisation is straightforward. However, especially the five-
and six-parameter distributions appear to have likelihood functions
that are rather flat in some of the parameters. This means that
widely different parameter values result in rather similar 
distributions. Consequently, the optimisation of the likelihood
is sometimes challenging, but this is not that problematic when
the resulting distributions are sensible. Still, we have derived 
the derivatives of the log-likelihood functions with respect to
all parameters, and supply those to \texttt{optim()}. These are
given in the Appendix. To ease the optimisation further, it is 
performed on the log-transformed parameters as all the parameters 
$\alpha,\theta,\beta,\eta,\tau,\gamma$ are positive. Finally,
we do the optimisation with several sets of start values for the 
parameters, and choose the parameter estimates that give the
highest likelihood value. More specifically, we use the moment
estimates $(\tilde{\alpha}^{ep},\tilde{\theta}^{ep},\tilde{\beta}^{ep})$,
as well as $(10^{10},\hat{\alpha}^{ga},\hat{\xi}^{ga})$ and 
$(\hat{\alpha}^{pa},1,\hat{\beta}^{pa})$ for the extended Pareto
distribution, where $(\hat{\xi}^{ga},\hat{\alpha}^{ga})$ and 
$(\hat{\alpha}^{pa},\hat{\beta}^{pa})$ are the maximum likelihood
estimates obtained for the Gamma and Pareto distributions, respectively.
For the four-parameter distribution, we employ 
$(\hat{\alpha}^{ep},\hat{\theta}^{ep},\hat{\beta}^{ep},1)$
and $(10^{10},1,\hat{\eta}^{we},1/\hat{\beta}^{we})$, where
$(\hat{\alpha}^{ep},\hat{\theta}^{ep},\hat{\beta}^{ep})$
and $(\hat{\beta}^{we},\hat{\eta}^{we})$ are the estimates
from the extended Pareto and Weibull distributions. The
start values used for the five-parameter distribution are 
$(\hat{\alpha}^{ep},\hat{\theta}^{ep},\hat{\beta}^{ep},1,1)$ 
and $(10^{10},\hat{\theta}^{lga},1,10^{10},\hat{\xi}^{lga}\cdot 10^{10})$,
where $(\hat{\xi}^{lga},\hat{\theta}^{lga})$ is the estimate
from the log-Gamma distribution. For the five-parameter 2
distribution, the start values are 
$(\hat{\alpha}^{fp},\hat{\theta}^{fp},\hat{\beta}^{fp},\hat{\eta}^{fp},1)$,
where $(\hat{\alpha}^{fp},\hat{\theta}^{fp},\hat{\beta}^{fp},\hat{\eta}^{fp})$
are the estimates from the four-parameter distribution. Finally,
the start values for the six-parameter distribution are\\
$(\hat{\alpha}^{fip},\hat{\theta}^{fip},\hat{\beta}^{fip},1,\hat{\tau}^{fip},\hat{\gamma}^{fip})$
and 
$(\hat{\alpha}^{fip2},\hat{\theta}^{fip2},\hat{\beta}^{fip2},\hat{\eta}^{fip2},1,\hat{\gamma}^{fip2})$,
where\\ 
$(\hat{\alpha}^{fip},\hat{\theta}^{fip},\hat{\beta}^{fip},\hat{\tau}^{fip},\hat{\gamma}^{fip})$
and
$(\hat{\alpha}^{fip2},\hat{\theta}^{fip2},\hat{\beta}^{fip2},\hat{\eta}^{fip2},\hat{\gamma}^{fip2})$
are the estimates from the five-parameter and five-parameter 2 
distributions.

\subsection{Results}\label{subsec:results}

The results from the simulations in terms of the bias and root
mean squared error (RMSE) of the quantile and reserve estimates 
are shown in Tables \ref{tab:res.z.0.95} to 
\ref{tab:res.x.0.99.1000.l}. The relative performance of the models 
with $n=500$ is rather similar to that with $n=5,000$, though with 
larger biases and RMSEs. Hence, the results from the corresponding
simulations are not shown here. Furthermore, our main interest is the
performance of the six-parameter distribution and its five- and 
four-parameter versions when the true distribution is one of the 
six special cases from Table \ref{tab:speccases}, i.e. whether our
model is able to capture the behaviour of these different 
distributions in practice for finite sample sizes, as it has been
shown to do in theory. Hence, we only report the results from the
simulations with one of the Table \ref{tab:speccases} distributions
as the true distribution. 

Tables \ref{tab:res.z.0.95} and \ref{tab:res.z.0.99} display the 
bias and RMSE in the estimates of $q_{\epsilon}^{Z}$ for 
$\epsilon=0.05$ and $0.01$, respectively, when the sample size is
$5,000$. Corresponding results for sample size $50$ are shown in 
Tables \ref{tab:res.z.0.95.l} and \ref{tab:res.z.0.99.l}. We see 
that when the sample size is large, the quantile estimates from 
the six-parameter, five-parameter 2 and four-parameter distribution 
are just as good as the ones from the true distribution, both in terms 
of bias and RMSE. The performance of the five-parameter distribution
is good for the $95\%$ quantile, but inferior to the others for the
$99\%$ quantile. As mentioned above, the impression from the simulations
with $n=500$ are the same. Hence, our model performs well also for
medium sample sizes. When the sample size is reduced to $50$, we see 
that the estimates of $q_{0.05}^{Z}$ from our model are still rather
good in terms of bias, but with comparatively higher RMSE, especially
for the most heavy-tailed distributions. As one moves further out in 
the tail of the claim size distribution, the quality of the estimates
is reasonable except when the true distribution is log-normal or
log-Gamma. As mentioned in Section \ref{sec:model}, the main source
of error in the reserve estimates from the six-parameter distribution
is likely to be the estimation error, while the Monte Carlo and model
error should be small. Hence, the decreased performance is probably due 
to increased parameter uncertainty for the smaller sample size.
 
Tables \ref{tab:res.x.0.95.10} and \ref{tab:res.x.0.99.10} show the 
bias and RMSE in the estimates of the reserve $q_{\epsilon}$ for 
$\epsilon=0.05$ and $0.01$, respectively, when the parameter of the
claim frequency distribution is $\lambda=10$ and $n=5,000$. 
Corresponding results for $\lambda=1,000$ are displayed in Tables
\ref{tab:res.x.0.95.1000} and \ref{tab:res.x.0.99.1000}. We see that
the reserve estimates from our model are very good, at least for all
distributions but the log-normal. The six- and fiv-parameter 2 
distributions perform rather well also for the log-normal claims. The
four- and five-parameter distributions do not perform that well in 
this case, especially for the $99\%$ reserve. This may be due to the 
fact that these two distributions do not have the log-normal as a 
special case. When $\lambda$ increases to $1,000$, the relative 
performance of the applied models is similar, though the difference
between them is smaller, as one would expect when one sums many
independent, identically distributed variables. The results for
$\lambda=100$ are somewhere between those for $\lambda=10$ and
$\lambda=1,000$, and are therefore not shown here.

Results corresponding to the ones in Tables \ref{tab:res.x.0.95.10}
to \ref{tab:res.x.0.99.1000} for $n=50$ are displayed in Tables 
\ref{tab:res.x.0.95.10.l} to \ref{tab:res.x.0.99.1000.l}. The five-
and six-parameter versions of our model still produce good estimates
of the $95\%$ reserve when $\lambda=10$. The quality of the $99\%$
reserve estimates is also relatively good, except for the log-normal
and log-Gamma distributed claims. The four-parameter distribution
does not perform well in this case. When $\lambda$ increases to
$1,000$, our model still gives reasonable estimates of the $95$
and $99\%$ reserves of the medium-tailed Weibull and Gamma claims.
However, it does not perform well for claims from heavy-tailed
distributions. 

Thus, all in all, our model appears to capture the tail behaviour
of claims from widely different distributions, ranging from the
moderate-tailed Weibull and Gamma to the very heavy-tailed Pareto
and extended Pareto when the amount of data is large or medium. 
When the sample size is small, however, our model should be used
with care. For medium-tailed claim sizes, the performance seems
to be reasonable, but not for heavy-tailed ones. Very large RMSEs
in such cases indicate that the parameter uncertainty is huge. 
Further, the performance of the five-parameter 2 and six-parameter
distributions is consistently better than that of the four- and 
five-parameter ones, even for small sample sizes. Also, the 
five-parameter 2 gives very similar results to the six-parameter.
This indicates that the sixth parameter $\tau$ may not be needed,
but that the fifth parameter $\gamma$ makes a significant 
contribution to the model.

\begin{table}[t]
	\begin{center}
		\begin{tabular}{lcccccc}
			\hline
			\hline
			& \multicolumn{6}{c}{\textsc{Bias}}\\
			\textsc{A$\backslash$T} & L-N & L-G & We & Pa & Ga & E. Pa.\\
			\hline
			L-N &  0.004 &  0.023 &  0.476 &  1.250 &  0.487 &  0.338\\
			L-G & -0.164 &  0.000 &  0.385 &  0.309 &  0.341 &  0.176\\
			We & -0.143 & -0.056 &  0.000 & -0.016 & -0.047 & -0.041\\
			Pa &  0.043 &  0.809 &  1.045 &  0.001 &  0.624 &  0.205\\
			Ga & -0.275 & -0.167 &  0.119 & -0.096 &  0.000 & -0.237\\
			E. Pa. &  0.004 & -0.001 &  0.120 &  0.002 &  0.001 &  0.004\\
			4-par. &  0.008 & -0.001 &  0.001 &  0.001 & -0.001 &  0.004\\
			5-par. &  0.007 & -0.001 &  0.039 &  0.001 &  0.000 &  0.003\\
			5-par. 2 & -0.009 & -0.001 &  0.001 &  0.002 & -0.001 &  0.004\\
			6-par & -0.004 & -0.001 &  0.001 &  0.000 & -0.001 &  0.003\\
			\hline
			& \multicolumn{6}{c}{\textsc{RMSE}}\\
			\textsc{A$\backslash$T} & L-N & L-G & We & Pa & Ga & E. Pa.\\
			\hline
			L-N & 0.071 & 0.049 & 0.477 & 1.258 & 0.489 & 0.356\\
			L-G & 0.177 & 0.043 & 0.386 & 0.324 & 0.343 & 0.207\\
			We & 0.162 & 0.077 & 0.015 & 0.097 & 0.055 & 0.139\\
			Pa & 0.084 & 0.810 & 1.045 & 0.088 & 0.625 & 0.232\\
			Ga & 0.284 & 0.172 & 0.121 & 0.137 & 0.028 & 0.265\\
			E. Pa. & 0.080 & 0.047 & 0.121 & 0.089 & 0.029 & 0.112\\
			4-par. & 0.077 & 0.047 & 0.016 & 0.091 & 0.030 & 0.113\\
			5-par. & 0.080 & 0.047 & 0.057 & 0.090 & 0.029 & 0.112\\
			5-par. 2 & 0.077 & 0.047 & 0.016 & 0.090 & 0.030 & 0.112\\
			6-par & 0.077 & 0.047 & 0.016 & 0.091 & 0.029 & 0.112\\
			\hline
			\hline
		\end{tabular}
		\caption{Bias and RMSE on the upper $95\%$ quantile of the distribution of $Z$ from each of the applied (A) distributions and each distribution from Table \ref{tab:speccases} as the true (T) one when $n=5,000$.}
		\label{tab:res.z.0.95}
	\end{center}
\end{table}

\begin{table}[t]
	\begin{center}
		\begin{tabular}{lcccccc}
			\hline
			\hline
			& \multicolumn{6}{c}{\textsc{Bias}}\\
			\textsc{A$\backslash$T} & L-N & L-G & We & Pa & Ga & E. Pa.\\
			\hline
			L-N &  0.010 & -0.060 &  1.340 &  5.105 &  1.622 &  0.615\\
			L-G & -0.305 &  0.000 &  1.185 &  2.131 &  1.362 &  0.911\\
			We & -1.600 & -0.862 &  0.000 & -1.453 & -0.212 & -2.243\\
			Pa & -0.664 &  1.073 &  2.188 &  0.007 &  1.287 & -0.982\\
			Ga & -1.886 & -0.911 &  0.333 & -1.914 & -0.001 & -2.667\\
			E. Pa. &  0.735 &  0.013 &  0.333 &  0.008 &  0.007 &  0.020\\
			4-par. &  0.296 &  0.013 &  0.004 &  0.009 & -0.001 &  0.021\\
			5-par. &  0.668 &  0.008 &  0.105 &  0.018 &  0.005 &  0.014\\
			5-par. 2 &  0.081 &  0.009 &  0.003 &  0.007 & -0.001 &  0.009\\
			6-par &  0.028 &  0.005 &  0.003 &  0.006 &  0.000 &  0.006\\
			\hline
			& \multicolumn{6}{c}{\textsc{RMSE}}\\
			\textsc{A$\backslash$T} & L-N & L-G & We & Pa & Ga & E. Pa.\\
			\hline
			L-N & 0.175 & 0.104 & 1.342 & 5.132 & 1.625 & 0.680\\
			L-G & 0.354 & 0.091 & 1.186 & 2.161 & 1.366 & 0.969\\
			We & 1.606 & 0.867 & 0.022 & 1.467 & 0.216 & 2.258\\
			Pa & 0.692 & 1.075 & 2.188 & 0.320 & 1.288 & 1.025\\
			Ga & 1.890 & 0.913 & 0.334 & 1.922 & 0.044 & 2.674\\
			E. Pa. & 0.797 & 0.126 & 0.334 & 0.325 & 0.051 & 0.384\\
			4-par. & 0.393 & 0.127 & 0.026 & 0.338 & 0.057 & 0.399\\
			5-par. & 0.744 & 0.123 & 0.154 & 0.336 & 0.052 & 0.392\\
			5-par. 2 & 0.279 & 0.127 & 0.026 & 0.339 & 0.057 & 0.403\\
			6-par & 0.268 & 0.126 & 0.026 & 0.342 & 0.055 & 0.406\\
			\hline
			\hline
		\end{tabular}
		\caption{Bias and RMSE on the upper $99\%$ quantile of the distribution of $Z$ from each of the applied (A) distributions and each distribution from Table \ref{tab:speccases} as the true (T) one when $n=5,000$.}
		\label{tab:res.z.0.99}
\end{center}
\end{table}

\begin{table}[t]
	\begin{center}
		\begin{tabular}{lcccccc}
			\hline
			\hline
			& \multicolumn{6}{c}{\textsc{Bias}}\\
			\textsc{A$\backslash$T} & L-N & L-G & We & Pa & Ga & E. Pa.\\
			\hline
			L-N &  0.113 &  0.049 &  0.462 &  1.312 &  0.501 &  0.316\\
			L-G & -0.094 & -0.005 &  0.355 &  0.276 &  0.324 &  0.103\\
			We & -0.111 & -0.100 & -0.017 & -0.120 & -0.071 & -0.242\\
			Pa &  0.102 &  0.825 &  1.042 & -0.050 &  0.629 &  0.084\\
			Ga & -0.210 & -0.172 &  0.100 & -0.170 & -0.016 & -0.349\\
			E. Pa. &  0.145 & -0.002 &  0.102 & -0.034 &  0.010 & -0.075\\
			4-par. &  0.123 & -0.009 & -0.020 & -0.043 & -0.041 & -0.077\\
			5-par. &  0.079 & -0.029 & -0.003 & -0.035 & -0.031 & -0.099\\
			5-par. 2 &  0.066 & -0.018 & -0.021 & -0.102 & -0.042 & -0.125\\
			6-par &  0.059 & -0.016 & -0.018 & -0.078 & -0.038 & -0.131\\
			\hline
			& \multicolumn{6}{c}{\textsc{RMSE}}\\
			\textsc{A$\backslash$T} & L-N & L-G & We & Pa & Ga & E. Pa.\\
			\hline
			L-N & 0.719 & 0.448 & 0.551 & 1.962 & 0.666 & 1.131\\
			L-G & 0.684 & 0.439 & 0.435 & 1.005 & 0.494 & 1.060\\
			We & 0.722 & 0.507 & 0.155 & 0.881 & 0.278 & 1.111\\
			Pa & 0.688 & 0.925 & 1.064 & 0.898 & 0.694 & 0.991\\
			Ga & 0.730 & 0.463 & 0.196 & 0.923 & 0.266 & 1.122\\
			E. Pa. & 0.861 & 0.532 & 0.197 & 0.927 & 0.284 & 1.168\\
			4-par. & 0.827 & 0.529 & 0.160 & 0.949 & 0.290 & 1.148\\
			5-par. & 0.785 & 0.497 & 0.160 & 0.933 & 0.285 & 1.111\\
			5-par. 2 & 0.787 & 0.522 & 0.160 & 0.921 & 0.289 & 1.121\\
			6-par & 0.779 & 0.515 & 0.161 & 0.933 & 0.289 & 1.112\\
			\hline
			\hline
		\end{tabular}
		\caption{Bias and RMSE on the upper $95\%$ quantile of the distribution of $Z$ from each of the applied (A) distributions and each distribution from Table \ref{tab:speccases} as the true (T) one when $n=50$.}
	\label{tab:res.z.0.95.l}
\end{center}
\end{table}

\begin{table}[t]
	\begin{center}
		\begin{tabular}{lcccccc}
			\hline
			\hline
			& \multicolumn{6}{c}{\textsc{Bias}}\\
			\textsc{A$\backslash$T} & L-N & L-G & We & Pa & Ga & E. Pa.\\
			\hline
			L-N &  0.325 &  0.007 &  1.329 &  5.675 &  1.681 &  0.643\\
			L-G & -0.052 &  0.007 &  1.132 &  2.215 &  1.340 &  0.825\\
			We & -1.536 & -0.933 & -0.026 & -1.652 & -0.253 & -2.610\\
			Pa & -0.433 &  1.117 &  2.183 &  0.127 &  1.296 & -1.180\\
			Ga & -1.776 & -0.918 &  0.300 & -2.042 & -0.030 & -2.852\\
			E. Pa. &  1.725 &  0.140 &  0.310 &  0.394 &  0.112 &  0.231\\
			4-par. &  1.452 &  0.266 & -0.021 &  0.594 & -0.016 &  0.445\\
			5-par. &  0.691 & -0.006 &  0.030 &  0.280 & -0.014 & -0.187\\
			5-par. 2 &  0.676 &  0.132 & -0.021 & -0.160 & -0.025 & -0.230\\
			6-par &  0.318 &  0.094 & -0.013 & -0.124 & -0.023 & -0.421\\
			\hline
			& \multicolumn{6}{c}{\textsc{RMSE}}\\
			\textsc{A$\backslash$T} & L-N & L-G & We & Pa & Ga & E. Pa.\\
			\hline
			L-N & 1.841 & 0.893 & 1.491 & 8.173 & 1.993 & 2.947\\
			L-G & 1.891 & 0.939 & 1.261 & 4.298 & 1.620 & 3.392\\
			We & 2.032 & 1.235 & 0.224 & 2.443 & 0.492 & 3.317\\
			Pa & 1.941 & 1.316 & 2.209 & 3.612 & 1.371 & 2.951\\
			Ga & 2.130 & 1.137 & 0.397 & 2.595 & 0.417 & 3.377\\
			E. Pa. & 4.173 & 1.615 & 0.410 & 4.359 & 0.668 & 4.901\\
			4-par. & 4.042 & 1.752 & 0.274 & 5.033 & 0.660 & 5.019\\
			5-par. & 2.762 & 1.341 & 0.271 & 3.900 & 0.583 & 3.893\\
			5-par. 2 & 2.809 & 1.494 & 0.274 & 3.444 & 0.638 & 3.935\\
			6-par & 2.523 & 1.445 & 0.279 & 3.592 & 0.606 & 3.701\\
			\hline
			\hline
		\end{tabular}
	\caption{Bias and RMSE on the upper $99\%$ quantile of the distribution of $Z$ from each of the applied (A) distributions and each distribution from Table \ref{tab:speccases} as the true (T) one when $n=50$.}
		\label{tab:res.z.0.99.l}
\end{center}
\end{table}

\begin{table}[t]
	\begin{center}
		\begin{tabular}{lcccccc}
			\hline
			\hline
			& \multicolumn{6}{c}{\textsc{Bias}}\\
			\textsc{A$\backslash$T} & L-N & L-G & We & Pa & Ga & E. Pa.\\
			\hline
			L-N &  0.025 & -0.054 &  1.344 &  9.265 &  2.012 &  0.976\\
			L-G & -0.399 &  0.001 &  1.073 &  3.835 &  1.568 &  1.496\\
			We & -1.437 & -0.215 &  0.000 & -1.875 & -0.041 & -2.104\\
			Pa & -0.662 &  1.315 &  1.853 &  0.006 &  1.224 & -1.136\\
			Ga & -1.719 & -0.549 &  0.137 & -2.064 & -0.004 & -2.520\\
			E. Pa. &  1.321 &  0.018 &  0.138 &  0.009 & -0.001 &  0.038\\
			4-par. &  0.523 &  0.018 &  0.000 &  0.024 & -0.006 &  0.046\\
			5-par. &  1.191 &  0.012 &  0.043 &  0.028 & -0.005 &  0.034\\
			5-par. 2 &  0.193 &  0.012 &  0.000 &  0.006 & -0.005 &  0.026\\
			6-par &  0.097 &  0.006 & -0.001 &  0.009 & -0.005 &  0.029\\
			\hline
			& \multicolumn{6}{c}{\textsc{RMSE}}\\
			\textsc{A$\backslash$T} & L-N & L-G & We & Pa & Ga & E. Pa.\\
			\hline
			L-N & 0.419 & 0.250 & 1.353 & 9.329 & 2.027 & 1.181\\
			L-G & 0.587 & 0.253 & 1.082 & 3.917 & 1.585 & 1.672\\
			We & 1.487 & 0.334 & 0.120 & 1.934 & 0.182 & 2.194\\
			Pa & 0.789 & 1.337 & 1.858 & 0.652 & 1.239 & 1.302\\
			Ga & 1.760 & 0.599 & 0.184 & 2.121 & 0.177 & 2.593\\
			E. Pa. & 1.461 & 0.273 & 0.184 & 0.655 & 0.178 & 0.781\\
			4-par. & 0.748 & 0.269 & 0.123 & 0.676 & 0.179 & 0.798\\
			5-par. & 1.358 & 0.270 & 0.138 & 0.676 & 0.179 & 0.795\\
			5-par. 2 & 0.562 & 0.273 & 0.121 & 0.678 & 0.179 & 0.808\\
			6-par & 0.540 & 0.273 & 0.121 & 0.682 & 0.180 & 0.817\\
			\hline
			\hline
		\end{tabular}
		\caption{Bias and RMSE on the upper $95\%$ reserve from each of the applied (A) distributions and each distribution from Table \ref{tab:speccases} as the true (T) one when $\lambda=10$ and $n=5,000$.}
	\label{tab:res.x.0.95.10}
\end{center}
\end{table}

\begin{table}[t]
	\begin{center}
		\begin{tabular}{lcccccc}
			\hline
			\hline
			& \multicolumn{6}{c}{\textsc{Bias}}\\
			\textsc{A$\backslash$T} & L-N & L-G & We & Pa & Ga & E. Pa.\\
			\hline
			L-N &  0.054 & -0.199 &  2.112 & 20.726 &  3.478 &  0.423\\
			L-G & -0.079 &  0.006 &  1.760 & 12.505 &  2.967 &  4.066\\
			We & -3.958 & -0.781 &  0.008 & -6.153 & -0.115 & -7.145\\
			Pa & -2.140 &  1.936 &  3.162 &  0.042 &  2.089 & -4.802\\
			Ga & -4.444 & -1.221 &  0.252 & -6.763 & -0.004 & -7.870\\
			E. Pa. &  5.122 &  0.054 &  0.253 &  0.039 &  0.007 &  0.133\\
			4-par. &  1.809 &  0.057 &  0.009 &  0.095 & -0.005 &  0.169\\
			5-par. &  4.536 &  0.039 &  0.082 &  0.135 & -0.004 &  0.125\\
			5-par. 2 &  0.814 &  0.050 &  0.004 &  0.057 &  0.000 &  0.091\\
			6-par &  0.459 &  0.033 &  0.006 &  0.076 &  0.001 &  0.106\\
			\hline
			& \multicolumn{6}{c}{\textsc{RMSE}}\\
			\textsc{A$\backslash$T} & L-N & L-G & We & Pa & Ga & E. Pa.\\
			\hline
			L-N &  0.663 &  0.380 &  2.122 & 20.867 &  3.496 &  1.166\\
			L-G &  0.764 &  0.346 &  1.771 & 12.669 &  2.988 &  4.335\\
			We &  3.991 &  0.846 &  0.150 &  6.189 &  0.249 &  7.194\\
			Pa &  2.241 &  1.962 &  3.168 &  1.464 &  2.103 &  4.903\\
			Ga &  4.471 &  1.258 &  0.295 &  6.793 &  0.225 &  7.909\\
			E. Pa. &  5.363 &  0.405 &  0.296 &  1.523 &  0.228 &  1.679\\
			4-par. &  2.142 &  0.404 &  0.150 &  1.736 &  0.225 &  1.867\\
			5-par. &  4.871 &  0.398 &  0.194 &  1.761 &  0.223 &  1.763\\
			5-par. 2 &  1.388 &  0.412 &  0.150 &  1.764 &  0.225 &  1.909\\
			6-par &  1.232 &  0.407 &  0.150 &  1.776 &  0.226 &  1.964\\
			\hline
			\hline
		\end{tabular}
		\caption{Bias and RMSE on the upper $99\%$ reserve from each of the applied (A) distributions and each distribution from Table \ref{tab:speccases} as the true (T) one when $\lambda=10$ and $n=5,000$.}
	\label{tab:res.x.0.99.10}
\end{center}
\end{table}

\begin{table}[t]
	\begin{center}
		\begin{tabular}{lcccccc}
			\hline
			\hline
			& \multicolumn{6}{c}{\textsc{Bias}}\\
			\textsc{A$\backslash$T} & L-N & L-G & We & Pa & Ga & E. Pa.\\
			\hline
			L-N & 1.26 & -0.49 & 46.57 & 319.30 & 66.01 & 29.99\\
			L-G & -9.65 & 0.02 & 34.58 & 161.65 & 47.61 & 53.50\\
			We & -12.18 & 8.62 & 0.03 & -39.49 & 2.79 & -26.221\\
			Pa & -11.98 & 11.10 & 15.36 & 0.99 & 10.32 & -31.01\\
			Ga & -14.39 & -4.37 & 1.10 & -25.48 & 0.14 & -29.39\\
			E. Pa. & 55.84 & 0.47 & 1.02 & 1.17 & 0.08 & 1.76\\
			4-par. & 19.15 & 0.52 & -0.11 & 1.91 & 0.07 & 2.32\\
			5-par. & 48.58 & 0.31 & 0.49 & 2.32 & -0.01 & 1.63\\
			5-par. 2 & 8.61 & 0.39 & -0.13 & 1.66 & 0.07 & 1.58\\
			6-par & 5.25 & 0.23 & -0.14 & 1.80 & 0.00 & 1.49\\
			\hline
			& \multicolumn{6}{c}{\textsc{RMSE}}\\
			\textsc{A$\backslash$T} & L-N & L-G & We & Pa & Ga & E. Pa.\\
			\hline
			L-N & 20.12 & 13.65 & 47.34 & 322.21 & 67.19 & 42.76\\
			L-G & 23.02 & 13.85 & 35.57 & 166.01 & 49.17 & 63.76\\
			We & 23.60 & 16.73 & 7.84 & 45.75 & 11.27 & 40.92\\
			Pa & 23.50 & 17.85 & 17.32 & 27.93 & 15.07 & 43.01\\
			Ga & 25.362 & 14.62 & 7.93 & 36.20 & 10.89 & 44.28\\
			E. Pa. & 62.72 & 14.10 & 7.93 & 28.00 & 10.91 & 34.87\\
			4-par. & 30.11 & 14.10 & 7.96 & 29.30 & 10.92 & 35.83\\
			5-par. & 57.06 & 14.05 & 8.13 & 29.40 & 10.92 & 35.25\\
			5-par. 2 & 24.55 & 14.12 & 7.98 & 29.37 & 10.93 & 36.04\\
			6-par & 23.68 & 14.11 & 8.00 & 29.54 & 10.96 & 36.09\\
			\hline
			\hline
		\end{tabular}
		\caption{Bias and RMSE on the upper $95\%$ reserve from each of the applied (A) distributions and each distribution from Table \ref{tab:speccases} as the true (T) one when $\lambda=1000$ and $n=5,000$.}
	\label{tab:res.x.0.95.1000}
\end{center}
\end{table}

\begin{table}[t]
	\begin{center}
		\begin{tabular}{lcccccc}
			\hline
			\hline
			& \multicolumn{6}{c}{\textsc{Bias}}\\
			\textsc{A$\backslash$T} & L-N & L-G & We & Pa & Ga & E. Pa.\\
			\hline
			L-N & 1.29 & -0.90 & 50.14 & 372.22 & 71.81 & 23.28\\
			L-G & -6.77 & -0.07 & 37.54 & 266.02 & 52.73 & 70.31\\
			We & -19.53 & 7.27 & 0.02 & -58.29 & 2.57 & -46.76\\
			Pa & -15.90 & 15.76 & 22.18 & 1.36 & 14.87 & -47.18\\
			Ga & -22.77 & -6.56 & 1.61 & -45.57 & 0.12 & -51.52\\
			E. Pa. & 88.52 & 0.49 & 1.51 & 1.70 & 0.07 & 2.49\\
			4-par. & 25.86 & 0.51 & -0.09 & 3.53 & 0.06 & 3.70\\
			5-par. & 75.18 & 0.30 & 0.67 & 4.06 & 0.01 & 2.46\\
			5-par. 2 & 12.03 & 0.40 & -0.11 & 3.15 & 0.06 & 2.82\\
			6-par & 7.43 & 0.20 & -0.14 & 3.58 & 0.00 & 2.66\\
			\hline
			& \multicolumn{6}{c}{\textsc{RMSE}}\\
			\textsc{A$\backslash$T} & L-N & L-G & We & Pa & Ga & E. Pa.\\
			\hline
			L-N & 21.14 & 14.11 & 50.90 & 375.62 & 72.98 & 39.77\\
			L-G & 23.55 & 14.34 & 38.50 & 271.57 & 54.25 & 80.64\\
			We & 28.66 & 16.48 & 8.03 & 63.08 & 11.49 & 57.03\\
			Pa & 26.54 & 21.35 & 23.68 & 32.74 & 18.69 & 56.60\\
			Ga & 31.38 & 15.77 & 8.19 & 52.80 & 11.21 & 61.88\\
			E. Pa. & 96.18 & 14.64 & 8.20 & 33.14 & 11.19 & 39.86\\
			4-par. & 36.74 & 14.66 & 8.16 & 36.89 & 11.19 & 42.55\\
			5-par. & 85.49 & 14.58 & 8.33 & 37.11 & 11.22 & 40.86\\
			5-par. 2 & 28.17 & 14.65 & 8.17 & 37.06 & 11.21 & 43.02\\
			6-par & 26.62 & 14.65 & 8.18 & 37.48 & 11.22 & 43.22\\
			\hline
			\hline
		\end{tabular}
		\caption{Bias and RMSE on the upper $99\%$ reserve from each of the applied (A) distributions and each distribution from Table \ref{tab:speccases} as the true (T) one when $\lambda=1000$ and $n=5,000$.}
	\label{tab:res.x.0.99.1000}
\end{center}
\end{table}

\begin{table}[t]
	\begin{center}
		\begin{tabular}{lcccccc}
			\hline
			\hline
			& \multicolumn{6}{c}{\textsc{Bias}}\\
			\textsc{A$\backslash$T} & L-N & L-G & We & Pa & Ga & E. Pa.\\
			\hline
			L-N & 0.79 & 0.14 & 1.37 & 10.56 & 2.22 & 1.08\\
			L-G & 0.23 & 0.08 & 1.02 & 4.14 & 1.60 & 1.40\\
			We & -1.13 & -0.23 & -0.03 & -2.24 & -0.05 & -2.79\\
			Pa & -0.05 & 1.40 & 1.84 & 0.54 & 1.26 & -1.48\\
			Ga & -1.32 & -0.50 & 0.10 & -2.37 & -0.00 & -2.98\\
			E. Pa. & 3.66 & 0.40 & 0.11 & 1.20 & 0.13 & 0.97\\
			4-par. & 3.10 & 0.56 & -0.02 & 1.64 & 0.06 & 1.21\\
			5-par. & 1.59 & 0.15 & -0.02 & 0.88 & 0.04 & -0.02\\
			5-par. 2 & 1.46 & 0.31 & -0.02 & -0.11 & 0.05 & -0.20\\
			6-par & 0.90 & 0.26 & -0.01 &  0.06 & 0.05 & -0.52\\
			\hline
			& \multicolumn{6}{c}{\textsc{RMSE}}\\
			\textsc{A$\backslash$T} & L-N & L-G & We & Pa & Ga & E. Pa.\\
			\hline
			L-N & 4.31 & 2.51 & 2.03 & 16.10 & 3.24 & 6.59\\
			L-G & 4.44 & 2.56 & 1.72 & 9.17 & 2.71 & 7.52\\
			We & 3.81 & 2.57 & 1.15 & 5.01 & 1.65 & 6.10\\
			Pa & 4.29 & 2.87 & 2.26 & 7.67 & 2.17 & 6.36\\
			Ga & 3.90 & 2.50 & 1.17 & 5.21 & 1.66 & 6.29\\
			E. Pa. & 8.79 & 3.33 & 1.17 & 9.53 & 1.80 & 10.08\\
			4-par. & 8.71 & 3.51 & 1.15 & 11.56 & 1.72 & 10.66\\
			5-par. & 5.95 & 2.96 & 1.16 & 8.34 & 1.69 & 8.26\\
			5-par. 2 & 5.96 & 3.11 & 1.15 & 7.22 & 1.71 & 8.24\\
			6-par & 5.41 & 3.08 & 1.16 & 7.64 & 1.69 & 7.77\\
			\hline
			\hline
		\end{tabular}
		\caption{Bias and RMSE on the upper $95\%$ reserve from each of the applied (A) distributions and each distribution from Table \ref{tab:speccases} as the true (T) one when $\lambda=10$ and $n=50$.}
	\label{tab:res.x.0.95.10.l}
\end{center}
\end{table}

\begin{table}[t]
	\begin{center}
		\begin{tabular}{lcccccc}
			\hline
			\hline
			& \multicolumn{6}{c}{\textsc{Bias}}\\
			\textsc{A$\backslash$T} & L-N & L-G & We & Pa & Ga & E. Pa.\\
			\hline
			L-N & 1.41 & 0.10 & 2.18 & 25.13 & 3.85 & 0.89\\
			L-G & 1.38 & 0.16 & 1.70 & 14.84 & 3.09 & 4.68\\
			We & -3.56 & -0.83 & -0.04 & -6.63 & -0.14 & -8.07\\
			Pa & -0.66 & 2.07 & 3.14 & 3.86 & 2.13 & -4.64\\
			Ga & -3.94 & -1.17 & 0.20 & -7.16 & -0.01 & -8.46\\
			E. Pa. & 14.70 & 1.25 & 0.21 & 7.89 & 0.32 & 6.33\\
			4-par. & 12.82 & 2.03 & -0.02 & 11.57 & 0.20 & 8.24\\
			5-par. & 6.07 & 0.61 & -0.010 & 5.89 & 0.09 & 2.45\\
			5-par. 2 & 5.53 & 1.09 & -0.02 & 2.37 & 0.16 & 2.05\\
			6-par & 3.82 & 1.08 & -0.01 & 2.93 & 0.14 & 1.02\\
			\hline
			& \multicolumn{6}{c}{\textsc{RMSE}}\\
			\textsc{A$\backslash$T} & L-N & L-G & We & Pa & Ga & E. Pa.\\
			\hline
			L-N & 6.97 & 3.29 & 2.99 & 39.46 & 5.22 & 10.85\\
			L-G & 8.32 & 3.50 & 2.51 & 27.20 & 4.50 & 16.41\\
			We & 5.95 & 3.31 & 1.36 & 9.05 & 2.01 & 10.80\\
			Pa & 7.67 & 3.79 & 3.54 & 24.29 & 3.06 & 11.92\\
			Ga & 6.13 & 3.24 & 1.40 & 9.37 & 2.02 & 11.06\\
			E. Pa. & 30.94 & 6.36 & 1.41 & 42.30 & 2.71 & 32.95\\
			4-par. & 35.87 & 8.03 & 1.36 & 62.22 & 2.37 & 39.93\\
			5-par. & 18.65 & 4.92 & 1.37 & 31.50 & 2.13 & 25.01\\
			5-par. 2 & 17.54 & 5.56 & 1.36 & 23.48 & 2.35 & 23.98\\
			6-par & 16.97 & 6.25 & 1.37 & 27.72 & 2.28 & 22.07\\
			\hline
			\hline
		\end{tabular}
		\caption{Bias and RMSE on the upper $99\%$ reserve from each of the applied (A) distributions and each distribution from Table \ref{tab:speccases} as the true (T) one when $\lambda=10$ and $n=50$.}
	\label{tab:res.x.0.99.10.l}
\end{center}
\end{table}

\begin{table}[t]
	\begin{center}
		\begin{tabular}{lcccccc}
			\hline
			\hline
			& \multicolumn{6}{c}{\textsc{Bias}}\\
			\textsc{A$\backslash$T} & L-N & L-G & We & Pa & Ga & E. Pa.\\
			\hline
			L-N & 35.95 & 9.37 & 47.76 & 379.00 & 75.69 & 33.02\\
			L-G & 21.65 & 4.37 & 32.34 & 203.38 & 50.69 & 55.15\\
			We & 6.11 & 10.80 & -0.91 & -55.01 & 3.86 & -56.26\\
			Pa & 16.91 & 14.76 & 14.28 & 63.59 & 12.09 & -47.55\\
			Ga & 8.57 & -0.46 & -0.15 & -40.20 & 1.65 & -52.97\\
			E. Pa. & 230.25 & 16.97 & 0.02 & 164.58 & 4.95 & 100.62\\
			4-par. & 221.97 & 25.83 & -0.47 & 279.04 & 4.26 & 143.03\\
			5-par. & 107.65 & 7.68 & -1.03 & 113.62 & 2.71 & 82.51\\
			5-par. 2 & 88.75 & 12.74 & -0.48 & 57.37 & 3.86 & 38.23\\
			6-par & 97.79 & 18.13 & -0.25 & 112.87 & 3.84 & 31.50\\
			\hline
			& \multicolumn{6}{c}{\textsc{RMSE}}\\
			\textsc{A$\backslash$T} & L-N & L-G & We & Pa & Ga & E. Pa.\\
			\hline
			L-N & 205.64 & 142.70 & 97.14 & 632.16 & 142.68 & 304.37\\
			L-G & 217.06 & 143.74 & 88.36 & 489.20 & 127.64 & 360.71\\
			We & 194.86 & 149.82 & 76.73 & 233.29 & 103.96 & 292.84\\
			Pa & 209.12 & 147.00 & 79.63 & 493.96 & 105.37 & 305.72\\
			Ga & 204.40 & 146.70 & 76.95 & 251.56 & 103.61 & 310.20\\
			E. Pa. & 583.73 & 172.87 & 76.95 & 1,255.97 & 106.77 & 671.77\\
			4-par. & 802.98 & 187.32 & 76.86 & 2,046.01 & 105.02 & 877.81\\
			5-par. & 445.44 & 159.55 & 77.27 & 719.91 & 104.17 & 994.71\\
			5-par. 2 & 378.09 & 164.44 & 76.85 & 552.24 & 105.09 & 564.65\\
			6-par & 611.29 & 202.13 & 77.06 & 1,254.26 & 105.26 & 502.11\\
			\hline
			\hline
		\end{tabular}
		\caption{Bias and RMSE on the upper $95\%$ reserve from each of the applied (A) distributions and each distribution from Table \ref{tab:speccases} as the true (T) one when $\lambda=1000$ and $n=50$.}
	\label{tab:res.x.0.95.1000.l}
\end{center}
\end{table}

\begin{table}[t]
	\begin{center}
		\begin{tabular}{lcccccc}
			\hline
			\hline
			& \multicolumn{6}{c}{\textsc{Bias}}\\
			\textsc{A$\backslash$T} & L-N & L-G & We & Pa & Ga & E. Pa.\\
			\hline
			L-N & 38.71 & 9.44 & 51.42 & 464.55 & 82.16 & 27.93\\
			L-G & 32.59 & 4.59 & 35.23 & 401.23 & 56.32 & 91.35\\
			We & -67.00 & 9.34 & -0.99 & -74.62 & 3.56 & -7.84\\
			Pa & 19.52 & 19.65 & 21.08 & 197.96 & 16.69 & -58.51\\
			Ga & 96.10 & -2.60 & 0.23 & -60.93 & 1.61 & -76.01\\
			E. Pa. & 532.12 & 23.82 & 0.44 & 640.07 & 6.21 & 303.66\\
			4-par. & 603.33 & 44.13 & -0.53 & 1,307.18 & 5.05 & 482.09\\
			5-par. & 331.77 & 11.94 & -1.06 & 401.44 & 2.86 & 1,632.20\\
			5-par. 2 & 203.55 & 19.97 & -0.54 & 308.52 & 5.01 & 217.32\\
			6-par & 831.41 & 68.95 & -0.29 & 1,279.52 & 5.52 & 250.79\\
			\hline
			& \multicolumn{6}{c}{\textsc{RMSE}}\\
			\textsc{A$\backslash$T} & L-N & L-G & We & Pa & Ga & E. Pa.\\
			\hline
			L-N & 217.17 & 147.23 & 101.21 & 799.52 & 150.22 & 322.07\\
			L-G & 244.53 & 148.72 & 91.57 & 913.80 & 134.35 & 452.60\\
			We & 201.83 & 154.23 & 78.41 & 247.45 & 106.58 & 308.23\\
			Pa & 230.12 & 152.05 & 83.37 & 1,346.81 & 109.04 & 341.41\\
			Ga & 211.16 & 150.92 & 78.65 & 264.53 & 106.25 & 325.22\\
			E. Pa. & 1,523.81 & 203.41 & 78.68 & 6,470.25 & 112.34 & 1,911.60\\
			4-par. & 3,027.09 & 262.63 & 78.54 & 12,924.03 & 108.72 & 2.84\\
			5-par. & 2,842.04 & 173.23 & 78.96 & 2,667.00 & 106.85 & 31,310.46\\
			5-par. 2 & 1,161.21 & 185.25 & 78.52 & 2,268.97 & 111.26 & 2,020.67\\
			6-par & 16,704.91 & 813.31 & 78.72 & 34,387.69 & 120.78 & 1,767.23\\
			\hline
			\hline
		\end{tabular}
		\caption{Bias and RMSE on the upper $99\%$ reserve from each of the applied (A) distributions and each distribution from Table \ref{tab:speccases} as the true (T) one when $\lambda=1000$ and $n=50$.}
	\label{tab:res.x.0.99.1000.l}
\end{center}
\end{table}

\section{Real data example}\label{sec:example}

In this section, we will apply our model as a claim severity
distribution to a set of motor insurance losses from a Norwegian
insurance company. These are discussed in \cite{bolviken2014},
and consist of $6,446$ claims from the years ... to ... The 
deductible is subtracted from the claims, which are given in NOK.
Further, the average claim intensity is $5.7\%$. We have fitted 
each of the ten distributions considered in the simulation study, 
and estimated the $95$ and $99\%$ reserves for each of them, when 
we assume an expected number of $\lambda=1,000$ claims during the 
next year. We have also estimated the $95$ and $99\%$ quantiles of 
the claim severity distribution for interpretation purposes. Each 
quantile estimate is accompanied by a $95\%$ confidence interval 
obtained from $m_{b}=1,000$ bootstrap samples from the given 
distribution. These are compared to the corresponding empirical 
estimates from the data. The results are shown in Table
\ref{tab:data.1.res}. 

We see that the reserve estimates from the five-parameter 2 and the 
extended Pareto distribution are the ones that are the closest to 
the empirical ones. However, the empirical estimate is within the 
confidence interval from almost all the distributions. The five- and 
six-parameter distributions all have confidence intervals that cover 
the empirical estimates, but the five-parameter 2 gives the estimate 
with the smallest uncertainty. The four-parameter distribution, on 
the other hand, over-estimates the two reserves. For the quantiles of
the claim size distribution, the log-normal and the five-parameter 2
distribution give the estimates that are the closest to the empirical 
ones. All the versions of our model have confidence intervals that
cover the empirical estimate of the $95\%$ quantile, but the 
four-parameter distributions significantly over-estimates the $99\%$
quantile, which may explain why its estimates of the reserve are
too large. Further, the quantiles are over-estimated by the log-Gamma
and partly by the Pareto, whereas they are under-estimated by the 
Weibull, Gamma and extended Pareto distributions. To test the quality
of the quantile estimates, we also perform a binomial back-test.
The results from the test are shown in Table \ref{tab:data.2.res}.
These confirm the impression that the five- and six-parameter 
distributions manage to capture the tail behaviour of the data.
All in all, the five- and six-parameter distributions therefore seem 
to be good models for this data set, and particularly the five-parameter 
2, which is consistent with our findings from Section \ref{sec:simstudy}.

\begin{table}[t]
\begin{center}
\begin{tabular}{lcccc}
\hline
\hline
 & \multicolumn{2}{c}{\textsc{Quant. of severity distr.}} & \multicolumn{2}{c}{\textsc{Reserve}}\\
\textsc{Distr.} & $95\%$ & $99\%$ & $95\%$ & $99\%$\\
\hline
\bf Emp & \bf 72.5 & \bf 139.9 & \bf 25.9 & \bf 26.7\\
L-N & 73.9 (71.3, 76.5) & 140.8 (134.7, 146.9) & 26.4 (25.6, 27.2) & 27.3 (26.5, 28.1)\\
L-G & 91.7 (87.7, 96.2) & 231.4 (217.4, 247.2) & 32.8 (31.2, 34.4) & 35.8 (33.8, 38.0)\\
We & 68.5 (66.7, 70.2) & 102.2 (99.1, 105.2) & 25.7 (25.2, 26.3) & 26.5 (25.9, 27.0)\\
Pa & 74.3 (71.9, 76.7) & 122.9 (117.0, 128.9) & 25.7 (25.0, 26.3) & 26.5 (25.8, 27.2)\\
Ga & 65.1 (63.5, 66.6) & 96.3 (93.7, 98.7) & 25.6 (25.0, 26.1) & 26.3 (25.7, 26.8)\\
E. Pa. & 68.5 (65.6, 71.0) & 131.1 (121.8, 139.1) & 25.7 (24.9, 26.5) & 26.6 (25.8, 27.5)\\
4-par. & 69.1 (65.5, 72.6) & 160.5 (144.5, 179.1) & 28.2 (26.6, 29.9) & 31.3 (28.8, 34.5)\\
5-par. & 69.4 (65.9, 86.9) & 146.4 (130.9, 192.7) & 26.5 (24.8, 30.3) & 27.8 (25.7, 32.5)\\
5-par. 2 & 70.3 (67.1, 73.7) & 143.0 (131.4, 154.2) & 26.0 (25.1, 27.0) & 27.0 (26.0, 28.1)\\
6-par. & 69.6 (65.8, 84.6) & 147.9 (131.0, 185.8) & 26.5 (24.7, 29.6) & 27.8 (25.7, 31.7)\\
\hline
\hline
\end{tabular}
\caption{Estimated $95$ and $99\%$ quantiles of the claim size distribution and reserves from each of the ten distributions considered in the simulation study, as well as empirical estimates, along with $95\%$ confidence intervals in the first two columns. The reserves are divided by $1,000$.}
\label{tab:data.1.res}
\end{center}
\end{table}

\begin{table}[t]
\begin{center}
\begin{tabular}{lcccc}
\hline
\hline
 & \multicolumn{2}{c}{\textsc{No. of exc.}} & \multicolumn{2}{c}{\textsc{p-value}}\\
\textsc{Distr.} & $95\%$ & $99\%$ & $95\%$ & $99\%$\\
\hline
L-N & 314 & 64 & 0.668 & 1.000\\
L-G & 212 & 11 & 0.000 & 0.000\\
We &  361 & 162 & 0.030 & 0.000\\
Pa & 310 & 102 & 0.511 & 0.000\\
Ga & 393 & 195 & 0.000 & 0.000\\
E. Pa. & 361 & 85 & 0.030 & 0.012\\
4-par. & 355 & 43 & 0.063 & 0.006\\
5-par. & 351 & 58 & 0.103 & 0.453\\
5-par. 2 & 346 & 61 & 0.179 & 0.754\\
6-par. & 348 & 57 & 0.145 & 0.381\\
\hline
\hline
\end{tabular}
\caption{Numbers of rejections and p-values from the back-test of of the quantile estimates.}
\label{tab:data.2.res}
\end{center}
\end{table}

\section{Conclusion}\label{sec:conclusion}

We propose a new class of claim severity distributions with six 
parameters, and show that this model has the standard two-parameter 
distributions, the log-normal, the log-Gamma, the Weibull, the 
Gamma and the Pareto, as special cases. This distribution is much 
more flexible than its special cases, and therefore more able to 
to capture important characteristics of claim severity data. On 
the other hand, an increased number of parameters usually leads 
to a larger uncertainty in parameter estimates. Therefore, we have 
investigated how this parameter uncertainty affects the 
estimate of the reserve, which is one of the nost important risk
measures within non-life insurance. This is done in a large 
simulation study, where we vary both the characteristics of the
claim size distributions and the sample size. We have also tried
our model on a set of motor insurance claims from a Norwegian
insurance company.

In all we simulate claim size data from ten different distributions, 
including the six special cases and four versions of our model,
each time fitting all the ten distributions to the data. Further,
we estimate the corresponding $95$ and $99\%$ quantiles of the
fitted distributions, and also the $95$ and $99\%$ reserves, where
the total loss distributions follows the collective risk model with
Poisson distributed claim numbers. The quality of the resulting 
estimates is assessed by the bias and the RMSE.

The results from the study show that as long as the amount of data
is reasonable, the five- and six-parameter versions of our model 
provide very good estimates of both the quantiles of the claim 
severity distribution and the reserves, for claim size distributions 
ranging from medium to very heavy tailed. When the sample size is
small, our model appears to struggle with heavy-tailed data, but is
still adequate for data with more moderate tails. The impression 
from the fit to the real data set is the same. Further, the 
performance of the five-parameter 2 distribution, obatined when the
parameter $\tau$ is set to $1$, is overall just as good as the
six-parameter one. This sixth parameter therefore does not seem
to provide a flexibility that is needed. On the other hand, the 
performance of the four-parameter distribution, where the parameter
$\gamma$ has also been set to $1$, is not as consistently good,
not even for smaller sample sizes. Hence, this fifth parameter
really seems to improve the distribution.
 
The poor performance of our model for small data sets following
heavy-tailed distributions is worth a closer look. Of course, 
estimates of quantiles far out in the right tail will inevitably 
be quite uncertain for such data. Still, one might obtain more 
reliable parameter estimates in these cases by adapting the 
estiamtion procedure. One could for instance try to put more
emphasis on the right tail by using a weighted maximum 
likelihood estimator with an adequate weight function. This is
a subject for further work.

\section{Acknowledgements}\label{sec:acknowledgements}
\section*{Appendix 1}\label{sec:appendix}
Let $D_\alpha(z),D_\theta(z),D_\beta(z),D_\tau(z),D_\gamma(z),D_\eta(z)$ 
be the partial derivatives 
of $\log\{f(z)\}$ with respect to $\alpha,\theta,\beta,\tau,\gamma,\eta$
which is straightforward to derive from
in~(\ref{loglike1}) and~(\ref{loglike2}). Adding their values
$D_\alpha(z_i)$, $D_\theta(z_i)$ and so forth over all historical losses
$z_1,\dots,z_n$ yield the gradient vector of the log-likelihood function
that can assist optimization. Introduce
\begin{displaymath}
v = 1+\frac{z}{\beta} \hspace{1cm}\mbox{ and }\hspace*{1cm} 
w = v^{\frac{1}{\gamma}}.
\end{displaymath}
and write $\psi(x)=d\log\{\Gamma(x)\}/dx$ for the so-called Digamma function.
Then, 
\begin{displaymath}
D_\alpha(z)= \psi(\alpha+\theta)-\psi(\alpha)-\frac{\theta}{\alpha}-\log\left(\frac{\theta}{\alpha}\tau^\frac{1}{\eta}(w-1)^\frac{1}{\eta}+1\right)
 +\frac{\alpha+\theta}{\alpha}\cdot\frac{(w-1)^\frac{1}{\eta}}{(w-1)^\frac{1}{\eta}+\alpha/\theta\tau^\frac{1}{\eta}}
\end{displaymath}
\begin{align*}
D_\theta(z)=  \psi(\alpha+\theta)-\psi(\theta)+\log(\theta)+1+\frac{\log(\tau)}{\eta}-\log(\alpha)
 +\frac{\log(w-1)}{\eta}&\\-\log\left(\frac{\theta}{\alpha}\tau^{\frac{1}{\eta}}(w-1)^{\frac{1}{\eta}}+1\right)
 -\frac{\alpha+\theta}{\theta}\cdot\frac{(w-1)^{\frac{1}{\eta}}}{(w-1)^{\frac{1}{\eta}}+\alpha/\theta\tau^{\frac{1}{\eta}}}&
\end{align*}
\begin{displaymath}
D_\beta(z)= 
-\frac{1}{\beta}-\frac{(1-\gamma)z}{\beta^{2}\gamma v}-\left(\frac{\theta}{\eta}-1\right)\frac{zw}{\gamma\beta^{2}v(w-1)}
+\frac{\alpha+\theta}{\beta^{2}\eta\gamma}\cdot\frac{zw(w-1)^{\frac{1}{\eta}-1}}{v\left((w-1)^{\frac{1}{\eta}}+\alpha/\theta\tau^{\frac{1}{\eta}}\right)}\\
\end{displaymath}
\begin{displaymath}
D_\tau(z) =  \frac{\theta}{\eta\tau}-\frac{\alpha+\theta}{\eta\tau}\cdot\frac{(w-1)^{\frac{1}{\eta}}}{(w-1)^{\frac{1}{\eta}}+\alpha/\theta\tau^{\frac{1}{\eta}}}\\
\end{displaymath}
\begin{displaymath}
D_\gamma(z) 
=  -\frac{1}{\gamma}-\frac{\log(v)}{\gamma^{2}}-\left(\frac{\theta}{\eta}-1\right)\frac{w}{\gamma^{2}(w-1)}\log(v)\\
 +\frac{\alpha+\theta}{\gamma^{2}\eta}\frac{(w-1)^{\frac{1}{\eta}-1}}{(w-1)^{\frac{1}{\eta}}+\alpha/\theta\tau^{\frac{1}{\eta}}}w\log(v),
\end{displaymath}
\begin{displaymath}
D_\eta(z) =  -\frac{\theta\log(\tau)}{\eta^{2}}-\frac{1}{\eta}-\frac{\theta}{\eta^{2}}\log(w-1)
+\frac{\alpha+\theta}{\eta^{2}}(\log(w-1)+\log(\tau))\frac{(w-1)^{\frac{1}{\eta}}}{(w-1)^{\frac{1}{\eta}}+\alpha/\theta\tau^{\frac{1}{\eta}}}\\
\end{displaymath}

\bibliography{references}

\begin{thebibliography}{11}
\expandafter\ifx\csname natexlab\endcsname\relax\def\natexlab#1{#1}\fi
\expandafter\ifx\csname url\endcsname\relax
  \def\url#1{\texttt{#1}}\fi
\expandafter\ifx\csname urlprefix\endcsname\relax\def\urlprefix{URL }\fi

\bibitem[{Bakar et~al.(2015)Bakar, Hamzah, Maghsoudi, and
  Nadarajah}]{bakar2015}
Bakar, S., Hamzah, N., Maghsoudi, M., Nadarajah, S., 2015. Modeling loss data
  using composite models. Insurance: Mathematics and Economics 61, 146--154.

\bibitem[{Beirlant et~al.(1996)Beirlant, Teugels, and Vyncker}]{beir96}
Beirlant, J., Teugels, J., Vyncker, P., 1996. Practical Analysis of Extreme
  Values. Leuven University Press.

\bibitem[{B{\o}lviken(2014)}]{bolviken2014}
B{\o}lviken, E., 2014. Computation and Modelling in Insurance and Finance.
  Cambridge University Press.

\bibitem[{Box and Cox(1964)}]{box64}
Box, G., Cox, D., 1964. An analysis of transformations (with discussion).
  Journal of the Royal Statistical Society, Series B 26, 231--252.

\bibitem[{Burr(1942)}]{Burr1942}
Burr, I.~W., 1942. Cumulative frequency functions. Annals of Mathematical
  Statistics 13, 215--232.

\bibitem[{Embrechts et~al.(1997)Embrechts, Kl{\"u}ppelberg, and
  Mikosch}]{embrechts1997}
Embrechts, P., Kl{\"u}ppelberg, C., Mikosch, T., 1997. Modelling Extremal
  Events. Stochastic Modelling and Applied Probability. Springer.

\bibitem[{Lee et~al.(2012)Lee, Li, and Wong}]{lee2012}
Lee, D., Li, W., Wong, T., 2012. Modeling insurance claims via a mixture
  exponential model combined with peaks-over-threshold approach. Insurance:
  Mathematics and Economics 51, 538--550.

\bibitem[{Lee and Lin(2010)}]{lee2010}
Lee, S., Lin, X., 2010. Modeling and evaluating insurance losses via mixtures
  of erlang distributions. North-American Actuarial Journal 14, 107--130.

\bibitem[{Miljkovic and Gr{\"u}n(2016)}]{miljkovic2016}
Miljkovic, T., Gr{\"u}n, B., 2016. Modeling loss data using mixtures of
  distributions. Insurance: Mathematics and Economics 70, 387--396.

\bibitem[{Pickands(1975)}]{pickands1975}
Pickands, J., 1975. Statistical inference using extreme order statistics.
  Annals of Statistics 3, 119--131.

\bibitem[{Scott(1992)}]{scott92}
Scott, P.~W., 1992. Multivariate Density Estimation: Theory, Practice and
  Visualization. John Wiley $\&$ sons.

\end{thebibliography}
\bibliographystyle{elsarticle-harv}

\end{document}